%
%
%
%
%
%
%
\documentclass[%
 aip,
 amsmath,amssymb,
 reprint,%
]{revtex4-1}

\usepackage{graphicx}
\usepackage{dcolumn}
\usepackage{bm}
\usepackage[utf8]{inputenc}
\usepackage[T1]{fontenc}
\usepackage{mathptmx}
\usepackage{amsmath,amssymb,amsthm,color,dsfont,mathtools,fullpage,setspace}
\usepackage{bibentry}
\usepackage{nicefrac}
\usepackage{siunitx}
\usepackage{xcolor}
\usepackage{soul}
\usepackage{xr-hyper}

\usepackage{hyperref}
\usepackage{tabularx,booktabs}
\newcolumntype{Y}{>{\centering\arraybackslash}X}
\newcommand{\<}{\left \langle}
\renewcommand{\>}{\right \rangle}
\newcommand{\1}{\mathds{1}}

\DeclareMathOperator{\cov}{cov}
\DeclareMathOperator{\var}{var}

\newcommand{\R}{\mathbb{R}}
\newcommand{\Obs}{\mathcal{G}}
\newcommand{\MBARCov}{\mathcal{A}}
\newcommand{\grad}{\nabla}
\newcommand{\RFN}{F} 
\newcommand{\VFS}{y}  
\newcommand{\VAS}{z}  

\newcommand{\qmix}{\pi_{\mathrm{mix}}}
\newcommand{\E}{\mathbf{E}}
\newcommand{\dw}{\overline{dw}}

\theoremstyle{plain}
\newtheorem{theorem}{Theorem}[section]
\newtheorem{assumption}[theorem]{Assumption}
\newtheorem{lemma}[theorem]{Lemma}


\begin{document}

\preprint{AIP/123-QED}

\title{Understanding the Sources of Error in MBAR through Asymptotic Analysis}


\author{Xiang Sherry Li}
\affiliation{
Department of Chemistry and James Franck Institute, University of Chicago, Chicago, Illinois 60637, USA
}
\author{Brian Van Koten}
\affiliation{Department of Mathematics and Statistics, University of Massachusetts, Amherst, Massachusetts 01003, USA}
\author{Aaron R. Dinner}
\affiliation{
Department of Chemistry and James Franck Institute, University of Chicago, Chicago, Illinois 60637, USA
}
\author{Erik H. Thiede}
\affiliation{Center for Computational Mathematics, Flatiron Institute, New York, New York 10010, USA}
\email{ehthiede@flatironinstitute.org}

\begin{abstract}
    Multiple sampling strategies commonly used in molecular dynamics, such as umbrella sampling and alchemical free energy methods, involve sampling from multiple thermodynamic states. 
    Commonly, the data are then recombined to construct estimates of free energies and ensemble averages using the Multistate Bennett Acceptance Ratio (MBAR) formalism.
    However, the error of the MBAR estimator is not well-understood: previous error analysis of MBAR assumed independent samples and did not permit attributing contributions to the total error to individual thermodynamic states. 
    In this work, we derive a novel central limit theorem for MBAR estimates. 
    This central limit theorem yields an error estimator which can be decomposed into contributions from the individual Markov chains used to sample the states.
    We demonstrate the error estimator for an umbrella sampling calculation of the alanine dipeptide in two dimensions and an alchemical calculation of the hydration free energy of methane. 
    In both cases, the states' individual contributions to the error provide insight into the sources of error of the simulations. 
    Our numerical results demonstrate that the time required for the Markov chain to decorrelate in individual thermodynamic states contributes considerably to the total MBAR error.
    Moreover, they indicate that it may be possible to use the contributions to tune the sampling and improve the accuracy of MBAR calculations.
\end{abstract}

\maketitle

\section{Introduction}

Molecular dynamics simulations provide a means to compute key quantities in statistical mechanics, typically in the form of ensemble averages of certain observables. 
In principle, it is possible to estimate ensemble averages by running a long, unbiased simulation of a system and averaging over the resulting trajectory. 
However, in practice this can be inefficient. 
For instance, the determination of the relative free energy is of particular interest since the free energy is the fundamental quantity that determines the spontaneity of thermodynamic processes and the accessibility of thermodynamic states.
However, the ensemble averages required for estimating free energies are notoriously slow to converge and typically require simulations to explore multiple regions of a system's configurational space\cite{chipot2007free,lelievre2010free}.
 
A common strategy for addressing this problem is to sample from multiple thermodynamic states.
If the states are chosen well, ensemble averages will converge much more quickly in these new states\cite{dinner2020stratification}.
This approach is the basis of multiple simulation strategies such as umbrella sampling, parallel tempering, and alchemical free energy simulation.
Umbrella sampling enforces sampling of the conformational changes of biomolecular systems by running a series of independent simulations, each biased to sample a different region of a physical system's phase space\cite{torrie1977nonphysical,Pangali1979Monte}. 
Parallel tempering simulates several replicas of a system with the same Hamiltonian but at different temperatures to allow the system to cross over energetic barriers\cite{swendsen1986replica,Geyer1991paralleltemp}.
Alchemical free energy calculations of molecular systems estimate the free energy difference between two molecular states by interpolating between their Hamiltonians\cite{tembre1984ligand,FrenkelSmit2002,CHODERA2011150}.
In all of these cases, data sampled from multiple thermodynamic states must be combined to obtain the averages of interest.

A popular algorithm for doing this is the Multistate Bennett Acceptance Ratio (MBAR)\cite{shirts2008statistically}, originally derived in~\onlinecite{vardi1985empirical}.
For statistically independent samples, MBAR combines the data across states in a statistically optimal manner.
Although molecular dynamics simulations are correlated, MBAR nevertheless achieves good results in practice.
However, solving for the MBAR estimate involves solving a nonlinear fixed point problem
which complicates error analysis.  
Previous attempts to construct error estimates have explicitly assumed that samples are statistically independent\cite{kong2003theory,shirts2008statistically}, which is typically not true for molecular dynamics data.

In the present paper, we build on~\onlinecite{thiede2016eigenvector,dinner2020stratification} to derive a formal expression for the asymptotic variance of the MBAR estimator that explicitly accounts for correlation in sampled data. Moreover, our estimator can be decomposed into error contributions from individual states. This can potentially give practitioners insight into how sampling in individual thermodynamic states affects global error, and may lead to adaptive sampling strategies to accelerate convergence. 

\section{Background on Monte Carlo and Asymptotic Error}\label{sec:background}
Many fundamental quantities in statistical mechanics take the form of high-dimensional integrals.
Physical theories often require estimating averages over a physical system's Boltzmann distribution of the form
\begin{equation}
    \< g\>  = \frac{\int g(x) e^{- H(x) / k_B T}dx}{\int e^{- H(x) / k_B T}dx },
    \label{eq:ensemble_average}
\end{equation}
where $H$ is the system's Hamiltonian and $x$ is its configuration in $\R^n$.
Alternatively, they may require estimates of the free energy difference between two regions of phase space:
\begin{align}
    \Delta G_{A \to B} 
    =& -k_B T \log  \frac{\int \1_B(x) e^{-H(x) / k_B T} dx}{\int \1_A(x) e^{-H(x) / k_B T} dx}
    \label{eq:free_energy_sets_integral} 
    \\
    =& -k_B T \log  \frac{\<\1_B \>}{\<\1_A\>},
    \label{eq:free_energy_sets}
\end{align}
where $\1_D(x)$ is $1$ if the configuration $x$ is in a region labeled $D$ and $0$ otherwise.
We might also wish to estimate the free energy difference between two Hamiltonians, given by
\begin{equation}
    \Delta G_{\alpha\to\beta} =-k_B T \log  \frac{\int  e^{-H_\beta(x) / k_B T} dx}{\int e^{-H_\alpha(x) / k_B T} dx}.
    \label{eq:free_energy_hams}
\end{equation}
For most systems, these integrals are too complex to be evaluated analytically, and the dimension of $x$ is too high to use quadrature.
Instead, they are typically treated by Monte Carlo methods.

\subsection{Markov chain Monte Carlo}

Assume that we are given a probability distribution with an unnormalized density $q$ over the Lebesgue measure on $\R^n$.
For instance, in statistical mechanics $q$ is typically the Boltzmann factor:
\begin{equation}
    q(x) = e^{-H(x) / k_B T}.
\end{equation}
We can then write the average of a function $g : \R^n \to \R$ over the distribution as 
\begin{equation}
    \< g\> =\frac{\int g(x) q(x) dx}{c}, \qquad c=\int q(x) dx.
\end{equation}
In statistical mechanics, we refer to these averages as \emph{ensemble averages}.
Particular choices of $g$ allow us to rewrite key quantities in statistical mechanics as ensemble averages.
Substituting the Boltzman factor for $q$ directly recovers~\eqref{eq:ensemble_average},
and the free energy differences between regions of phase space in~\eqref{eq:free_energy_sets} is merely the ratio between two averages.
Similarly, to estimate the free energy difference between two Hamiltonians, we can set $q$ to $\exp\left( -H_\alpha / k_B T\right)$ and rewrite~\eqref{eq:free_energy_hams} as  
\begin{equation}
    e^{-\Delta G_{\alpha\to\beta}/k_B T} = 
    \< e^{-\left(H_\beta(x) - H_\alpha(x)\right) / k_B T} \>.
    \label{eq:zwanzig_relation}
\end{equation}

Monte Carlo methods approximate ensemble averages
by drawing a sequence of $N$ samples $\left\{X_t\right\}$ from the probability distribution and averaging over them.
If the sampling procedure is chosen appropriately, we expect sample averages to converge to the true (ensemble) average over $p$.
\begin{equation}
    \bar{g} = \frac{1}{N} \sum_{t=0}^{N-1} g\left( X_t \right) \xrightarrow{a.s.} \<g\>.
    \label{eq:iid_lln}
\end{equation}
Here $a.s.$ denotes almost sure convergence, a strong form of probabilistic convergence.
(Specifically, the probability of $\bar{g}$ not converging to $\<g\>$ is zero.)
If the samples are statistically independent we say that our samples are \emph{independent and identically distributed} (IID), and~\eqref{eq:iid_lln} is guaranteed to hold by the Law of Large Numbers\cite{lelievre2010free}.   
However, in practice it is often impossible to generate IID samples, and we must instead generate samples by running a Markov chain that has $p$ as its stationary distribution: a practice known as \emph{Markov chain Monte Carlo} (MCMC).
Then,~\eqref{eq:iid_lln} still holds if the Markov chain is ergodic~\cite{lelievre2010free}.

\subsection{Asymptotic Variance of Monte Carlo Estimates}\label{ssec:avar_mcmc}

While~\eqref{eq:iid_lln} guarantees that the error goes to zero as the number of samples increases, it says nothing about how quickly this happens.
A common method to quantify how the sampling error decreases increasing sample size is to use a \emph{Central Limit Theorem} (CLT): 
a theorem showing that a sequence of random variables converges to a known normal distribution\cite{lelievre2010free}.
Specifically, assume that we wish to evaluate the ensemble average of several functions, each denoted $g_i$.
Concatenating our sample means and ensemble averages into vectors we denote as $\bar{\mathbf{g}}$ and $\<\mathbf{g}\>$ respectively,
we can often show that the error between the two converges to a multivariate normal (Gaussian) distribution when appropriately scaled:
\begin{equation}
    \sqrt{N} \left(\bar{\mathbf{g}} -\<\mathbf{g}\> \right) \xrightarrow{d} \mathbf{N}\left(0, \Sigma \right).
    \label{eq:multivariate_mcmc_clt}
\end{equation}
Here $\mathbf{N}$ denotes a normal (Gaussian) random variable with mean vector $\mu$ and covariance matrix $\Sigma$, known as the \emph{asymptotic covariance}.
The symbol $\xrightarrow{d}$ denotes convergence in distribution (a weaker form of probabilistic convergence than almost sure convergence).
For IID samples,~\eqref{eq:multivariate_mcmc_clt} holds for all functions with finite variance and $\Sigma_{ij}$ is simply the covariance between $g_i$ and $g_j$ over $p$.
If samples are instead gathered from a Markov chain $X_t$, proving a CLT requires more technical conditions on the nature of the Markov Chain and $\mathbf{g}$ \cite{chan1993central,tierney1994markov,geyer1994discussion,jones2004markov}.
However, for most commonly used Markov chains and most reasonable applications, we can expect~\eqref{eq:multivariate_mcmc_clt} to hold.
In this case, if the Markov chain is time-homogeneous (i.e., the rule for updating the Markov chain is the same at all times) with ergodic distribution $p$, the asymptotic covariance matrix is given by
\begin{equation}\label{eq: first integrated autocovariance formula}
    \Sigma_{ij}=\cov\{g_i(X_t),g_j(X_t)\}+2\sum_{k=1}^\infty \cov\{g_i(X_t),g_j(X_{t+k})\},
\end{equation}
where in this formula (and only this formula) we assume that the chain is stationary with $X_t$ distributed according to $p$ for all $t$. 

The CLT and the asymptotic covariance help diagnose the error and convergence of a Markov chain Monte Carlo simulation.
For example, under mild technical conditions (specifically geometric ergodicity and bounded $g$), $\Sigma_{ii}/N$ is asymptotic to $\var\{\bar g_i \}$. 
Consequently, we can treat $\sqrt{\Sigma_{ii}/N}$ as a rough estimate for the error in using $\bar g_i$ to estimate of $\langle g_i \rangle$.
The sampling efficiency of the Markov chain relative to IID sampling from $p$ can be quantified by the \emph{autocorrelation time}:
\begin{equation}
    \tau_{g_i} = \frac{\Sigma_{ii}}{\var \left\{g_i \right\}}.
    \label{eq:defn_actime}
\end{equation}
Equivalently,
\begin{equation*}
    \Sigma_{ii} = \tau_{g_i} \var \left\{g_i \right\}.
\end{equation*}
Since $\var \left\{ g_i \right\}/N$ is the variance for IID sampling, we can interpret the autocorrelation time as how many MCMC samples are required to achieve the same reduction in error as a single IID sample\cite{lelievre2010free, FrenkelSmit2002}.

\section{The MBAR Equations}\label{sec:MBAR}
In the previous section, we considered sampling from a single distribution.
However, we may often have samples collected from multiple, related probability distributions.
For concreteness, assume we have $L$ probability distributions, each with an unnormalized probability density $q_i$. We refer to these distributions as \emph{states}.
The ensemble average of an observable $g(x)$ in each state is given by
\begin{equation}
    \label{eq:ensemble_average_over_states}
    \< g\> _i = \frac{\int g(x) q_i(x)dx}{c_i},\quad c_i = \int q_i(x) dx.
\end{equation}
Here the constant $c_i$ is the normalization constant for $q_i$.
If $q_i$ is a Boltzmann distribution, then $c_i$ is the corresponding partition function.
Next, we assume that for every state we have collected a set of $N_i$ samples, denoted $\{X^i_t\}$.
We can then approximate $\< g\>_i $ by the sample average
\begin{equation*}
    \< g \>_i \approx \frac{1}{N_i}\sum_{t=1}^{N_i}g(X_t^i)
    \label{eq:empirical_estimate_over_states}
\end{equation*}
However, if the states have shared regions with high probability,
we can construct improved estimates of~\eqref{eq:ensemble_average_over_states} by using data from all of the states, not just state $i$.
This is the aim of the MBAR algorithm\cite{vardi1985empirical,shirts2008statistically}.
Following the treatment in Ref.\ \onlinecite{geyer1994estimating},
we observe that we can view the union of the samples from the states as samples from a combined distribution
known as a \emph{mixture distribution}.
Let
\begin{equation*}
  N = \sum_{i=1}^L N_i
\end{equation*}
be the total sample size, and let
\begin{equation*}
  \kappa_i = \frac{N_i}{N}
\end{equation*}
be the fraction of sample points collected in state $i$.
To simplify the presentation, we will assume that $\kappa_i$ is constant and always greater than zero.
(A version of our main result that relaxes this assumption is given in the supplement.)
We define the mixture distribution to be
\begin{equation}
    \pi_{\text{mix}}(x)=\sum_{i=1}^L\kappa_i  q_i(x)  / c_i.
\end{equation}
We can then write
\begin{align}
    \<g\>_i
        &=\int\frac{ g(x) q_i(x) / c_i }{\pi_\text{mix}(x)}\pi_\text{mix}(x)dx \nonumber \\
        &=\int\frac{g(x)q_i(x) / c_i }{\sum_{k=1}^L\kappa_k q_k(x) / c_k }\sum_{j=1}^L\kappa_j q_j(x) / c_j dx\\
        &= \sum_{j=1}^L\kappa_j\<\frac{g q_i / c_i }{\sum_{k=1}^L\kappa_k q_k / c_k }\>_j.
    \label{eq:intermediate_reweighting}
\end{align}
In general, the normalization constants for the states are not known.  
We therefore rewrite~\eqref{eq:intermediate_reweighting} in terms of the states' (unitless) relative free energies, which we denote $f_i$.
We arbitrarily set the average free energy to be zero, so 
\begin{equation}
    \frac{1}{L}\sum_{i=1}^L f_i = 0,
    \label{eq:normalization_of_zs}
\end{equation}
and therefore the free energies are defined by
\begin{equation}
    f_i = - \log c_i + \frac{1}{L}\sum_{j=1}^L \log c_j.
\end{equation}
Dividing both the numerator and denominator of~\eqref{eq:intermediate_reweighting} by $\exp(-(1/L) \sum_{i=1}^L f_i)$,
after a few manipulations we have 
\begin{align}
    \<g\>_i = \sum_{j=1}^L\kappa_j\<\frac{g q_i e^{f_i} }{\sum_{k=1}^L\kappa_k q_k e^{f_k} }\>_j.
    \label{eq:mbar_average}
\end{align}
This equation can be used to estimate $\<g\>_i$ if we are given an estimate of the free energies, $\bar{f}$.
Replacing each ensemble average on the right-hand side with a Monte Carlo estimate
we have
\begin{equation}
    \bar{g}_i = \sum_{j=1}^L\frac{\kappa_j}{N_j} \sum_{t=1}^{N_j}\frac{ g(X_t^j) q_i(X_t^j) e^{\bar{f}_i}
    }{\sum_{k=1}^L\kappa_k q_k(X_t^j) e^{\bar{f}_k} }.
    \label{eq:estimated_mbar_avg}
\end{equation}
This estimator uses data from every state, not just state $i$.
Moreover, we can also use~\eqref{eq:mbar_average}
to estimate the free energies themselves.
Since the ensemble average of the function $g(x)=1$ is always $1$,
\begin{align}
    1 =& \sum_{j=1}^L\kappa_j\<\frac{q_i e^{f_i} }{\sum_{k=1}^L\kappa_k q_k e^{f_k}}\>_j \\
    \implies  f_i =& - \log \sum_{j=1}^L \kappa_j\<\frac{q_i }{\sum_{k=1}^L\kappa_k q_k e^{f_k}}\>_j.
    \label{eq:z_iteration}
\end{align}
One can thus estimate the free energy by defining $\bar{f}$ to be the solution to 
\begin{equation}
    \bar{f}_i = - \log \left(\sum_{j=1}^L\frac{\kappa_j}{N_j} \sum_{t=1}^{N_j}\frac{q_i(X_t^j)}{\sum_{k=1}^L\kappa_k q_k(X_t^j)e^{\bar{f}_k}} \right).
    \label{eq:estimated_z_iteration}
\end{equation}
Not only can this equation can be solved using standard root-finding methods such as Newton-Raphson and gradient descent\cite{shirts2008statistically},
but there exist rapid algorithms for solving it through a succession of estimation tasks\cite{meng1996simulating,thiede2016eigenvector,dinner2020stratification}.
Equations~\eqref{eq:estimated_mbar_avg} and~\eqref{eq:estimated_z_iteration} are the MBAR estimates of the ensemble average and the free energies, respectively\cite{shirts2008statistically}.
With sufficient overlap of the samples from different states, \eqref{eq:z_iteration} uniquely determines $f$.
Specifically, if the matrix $M_{ij} = \<q_i\>_j$ is irreducible, then by Theorem~1 in Ref.~\onlinecite{geyer1994estimating} or Proposition~1.1 in Ref.~\onlinecite{gill_large_1988} the $f_i$
are uniquely specified by~\eqref{eq:z_iteration}.
An analogous statement holds for $\bar{f}$.
When $M$ is irreducible, Theorem~1.1 in Ref.~\onlinecite{gill_large_1988} implies that equation~\eqref{eq:estimated_z_iteration} almost surely has a unique solution $\bar f$ when the total sample size $N$ is sufficiently large. Moreover, the estimates of the free energies and the ensemble averages converge to the true values as $N$ increases. To be precise, $\bar f_i \xrightarrow{a.s.} f_i$ and $\bar g_i \xrightarrow{a.s.} \langle g \rangle_i$ by Theorem~1 in Ref.~\onlinecite{geyer1994estimating}.

\subsection{Estimating Chemical Quantities using MBAR}

Specific manipulations of state free energies and ensemble averages  allow us to reconstruct quantities of interest in a broad range of contexts.
Here, we discuss the analysis of data from three common algorithms: parallel tempering, alchemical free energy simulations, and umbrella sampling.

In parallel tempering, we seek to estimate ensemble averages for a system with unnormalized probability density
\begin{equation}
    e^{-H(x) / k_B T}.
    \label{eq:basic_Boltzmann}
\end{equation}
However, this density may be highly multimodal, making the probability distribution difficult to sample.
Parallel tempering addresses this by running multiple copies of the system with the same Hamiltonian but different temperatures\cite{swendsen1986replica,Geyer1991paralleltemp}.  We write their distributions as
\begin{equation}
    q_i(x) = e^{-H(x)/k_B (T + \delta T_i)}
  \end{equation}
One copy, here arbitrarily chosen to have index 1, is set to be at the original temperature (i.e., $\delta T_1 = 0$)
and all other copies have $\delta T_i \neq 0$.
The copies then periodically swap molecular configurations via Monte Carlo moves on the space of copies.
In principle, one can estimate averages over~\eqref{eq:basic_Boltzmann} by averaging over all configurations collected from $q_1$.
However, using the MBAR estimator~\eqref{eq:estimated_mbar_avg} allows one to use data from all states, giving a more accurate answer.

In alchemical free energy simulations, we seek to estimate the free energy difference between two Hamiltonians as in~\eqref{eq:free_energy_hams}\cite{tembre1984ligand,FrenkelSmit2002,CHODERA2011150}.
However, rather than sampling only the state with $H_\alpha$, we sample a set of $L$ states that interpolate between $H_\alpha$ and $H_\beta$.  A simple choice would be to set
\begin{equation}
-k_B T \log q_i = H_\alpha + 
\lambda\left(\frac{i-1}{L-1} \right) (H_\beta - H_\alpha)
    \label{eq:alchemical_interpolation}
\end{equation}
where $\lambda:\left[0,1 \right] \to \left[0,1 \right]$ is a monotonic function such that $\lambda(0)=0$ and $\lambda(1)=1$,
although in practice, more complex interpolations are often required \cite{simonson1993free,beutler1994avoiding,steinbrecher2007nonlinear}.
With this set of state definitions, the (unitless)
free energy difference between the two Hamiltonians is simply
the difference between the free energies of the first and last states.
\begin{equation}
    - \log \frac{\int e^{-H_\beta(x)/k_B T} dx}{\int e^{-H_\alpha(x)/k_B T} dx} = f_L - f_1
    \label{eq:mbar_state_fe_diff}
\end{equation}
Consequently, we can solve~\eqref{eq:estimated_z_iteration} and estimate the free energy difference as $\bar{f}_L - \bar{f}_1$.

In umbrella sampling\cite{torrie1977nonphysical,Pangali1979Monte} we construct a collection of states
\begin{equation}
      q_i(x) = \psi_i(x) q(x)
\end{equation}
by multiplying a density $q$ with a biasing function $\psi_i$.
We then aim to estimate averages of observables over $q$, such as those in~\eqref{eq:ensemble_average} and~\eqref{eq:free_energy_sets}.
To estimate these averages over $q$ using MBAR by steps similar to those used to derive~\eqref{eq:intermediate_reweighting}, we write
\begin{align}
  \frac{\int g(x) q(x) dx}{\int q(x) dx} 
  =& \frac{\int g(x) q(x) (\pi_{\text{mix}} (x) / \pi_{\text{mix}}(x)) dx}{\int q(x) (\pi_{\text{mix}}(x) / \pi_{\text{mix}}(x))dx} \nonumber \\
  =& \frac{\int g(x) q(x) \frac{\sum_j \kappa_j q_j(x) e^{f_j} }{\sum_l \kappa_l q_l(x) e^{f_l}} \, dx}{\int q(x) \frac{\sum_k \kappa_k q_k(x) e^{f_k} }{\sum_m \kappa_m q_m(x) e^{f_m}} \, dx} \nonumber \\
  =& \frac{\sum_{j=1}^L\kappa_j\<{g q}/{(\sum_{l=1}^L\kappa_l q_l e^{f_l})}\>_j }
     {\sum_{k=1}^L\kappa_k\<{q}/{(\sum_{m=1}^L\kappa_m q_m e^{f_m})}\>_k} \nonumber \\
  =& \frac{\sum_{j=1}^L\kappa_j\<{g }/{(\sum_{l=1}^L\kappa_l \psi_l e^{f_l})}\>_j }
     {\sum_{k=1}^L\kappa_k\<{1}/{(\sum_{m=1}^L\kappa_m \psi_m e^{f_m})}\>_k}.
     \label{eq:us_average}
\end{align}
We can also use umbrella sampling to estimate the difference in free energy between two states.
Comparing to~\eqref{eq:free_energy_sets} and setting $q$ to be the Boltzmann factor, 
we have
\begin{align}
    \Delta G_{A \to B} 
        =& -k_B T \log  \frac{\int \1_B(x) e^{-H(x) / k_B T} dx}{\int \1_A(x) e^{-H(x) / k_B T} dx} \nonumber \\
        =& -k_B T \log  
            \frac{
                \sum_{j=1}^L\kappa_j\<{ \1_A q }/{(\sum_{l=1}^L\kappa_l q_l e^{f_l})}\>_j 
            }{
                \sum_{k=1}^L\kappa_k\<{\1_B q}/{(\sum_{m=1}^L\kappa_m q_m e^{f_m})}\>_k
            } \nonumber \\
        =& -k_B T \log  
            \frac{
                \sum_{j=1}^L\kappa_j\<{ \1_A }/{(\sum_{l=1}^L\kappa_l \psi_l e^{f_l})}\>_j 
            }{
                \sum_{k=1}^L\kappa_k\<{\1_B}/{(\sum_{m=1}^L\kappa_m \psi_m e^{f_m})}\>_k
            },
\end{align}
by steps similar to those for~\eqref{eq:us_average}.

These examples show how MBAR can be used to efficiently construct estimates from algorithms that collect data in multiple states.
Indeed, when IID samples are collected from each state, then MBAR gives the maximum likelihood estimate \cite{vardi1985empirical} and achieves the best possible mean-squared error in the large-sample limit \cite{shirts2008statistically}.
MBAR does not give the maximum likelihood estimate for correlated samples, but nevertheless it has been empirically observed to give good results.

However, it is not obvious how to estimate the uncertainty in MBAR averages.
In previous work, Kong {\it et al.} constructed an estimator for the asymptotic covariance using the Cramer-Rao lower bound of the  variance \cite{kong2003theory}.
When samples are uncorrelated, MBAR achieves this lower bound. However, when samples are correlated, this estimator underestimates the asymptotic error.  
Moreover, the estimator only gives the total asymptotic error and not the contributions from individual thermodynamic states. 
This makes it difficult to determine how the parameters of individual states contribute to the accuracy of the total simulation.
An alternate approach would be to attempt to construct a CLT for MBAR estimates.
As discussed in Subsection~\ref{ssec:avar_mcmc}, CLTs are able to capture the effect of the dynamics on sampling error.
Moreover, previous work on other algorithms for recombining data from multiple states\cite{thiede2016eigenvector} has shown that CLTs can be used to connect the sampling of individual states to the total error of the estimate.
In this work, we establish a CLT for the MBAR equations and show that the resulting error estimates gives detailed insight into how the parameters of multistate simulations contribute to the total error.

\section{Asymptotic Variance for the MBAR equations} 
Here, we give an expression for the asymptotic variance of MBAR estimates of observables and normalization constants.
Our approach builds upon the work of Geyer\cite{geyer1994estimating}.
Our contribution is essentially to fill in missing details and to correct errors.
Most importantly, the formula for the asymptotic variance of observable averages $\bar{g}$ is not correct in Ref.\ \onlinecite{geyer1994estimating}.

\subsection{CLTs for the raw output of MBAR}\label{ssec:delta_method}

MBAR estimates of observables require calculating the values of $\bar f$ as well as one or more empirical averages of the form
\begin{equation}
    \bar{\omega} =\sum_{j=1}^L \frac{\kappa_j}{N_j} \sum_{t=1}^{N_j} \frac{w(X_t^j)}{\sum_{k=1}^L \kappa_k q_k (X_t^j) e^{\bar{f}_k} }
    \label{eq:generic_component_average}
\end{equation}
for some function $w:\R^{n}\to\R$.
For instance, in~\eqref{eq:estimated_mbar_avg} we set $w = q_i g$ and subsequently multiply by $e^{\bar{f}_i}$.
The presence of $\bar{f}$ in~\eqref{eq:generic_component_average} means that the errors in our observable estimates and in our estimates of the state free energies are correlated.
Consequently, we must consider the asymptotic covariance of the free energies and our observables jointly.

To do so, we rewrite~\eqref{eq:estimated_z_iteration} and~\eqref{eq:generic_component_average} as a single root finding problem.
We concatenate the vector of estimated free energies and empirical averages into a single vector
\begin{equation}
    \bar{v} = \left(\bar{f}_1, \ldots, \bar{f}_L, \bar{\omega}_1, \ldots, \bar{\omega}_M  \right).
    \label{eq:concatenated_sample_means}
\end{equation}
The vector $\bar{v}$ is the root of the function $\bar{\RFN}:\R^{L+M}\to\R^{L+M}$,
where if $i\leq L$ 
\begin{equation}
    \bar{\RFN}_i(y)  = 
            \kappa_i - \sum_{j=1}^L\kappa_j \frac{1}{N_j} \sum_{t=1}^{N_j} \frac{\kappa_i q_i(X_t^j) e^{y_i}}{\sum_{k=1}^L\kappa_k q_k(X_t^j)e^{y_k}}
    \label{eq:root_fxn_states}
\end{equation}
and if  $i > L$ then
\begin{equation}
    \bar{\RFN}_i(y)  = 
            y_i - \sum_{j=1}^L\kappa_j \frac{1}{N_j} \sum_{t=1}^{N_j} \frac{w_{i-L}(X_t^j)}{\sum_{k=1}^L \kappa_k q_k (X_t^j) e^{y_k} }.
    \label{eq:root_fxn_avgs}
\end{equation}
Writing the MBAR estimates as the roots of $\bar{\RFN}$ suggests a strategy for proving a CLT.  For any fixed $y$, each element in $\bar{\RFN}(y)$ is a sum of sample averages over our states.
It is therefore reasonable to assume the existence of a CLT for each of the sample averages.
If we can then convert a CLT for each of the collection of averages into a CLT for the \emph{roots} of $\bar{\RFN}$, then we have proven a CLT for MBAR estimates.
Indeed, this is precisely the strategy we pursue.
A full proof of the CLT is given in Section~I of the supplement.
Here, we merely state introduce the key quantities necessary to state our results and state our result.

We first discuss the asymptotic covariance structure of each of the averages in~\eqref{eq:root_fxn_avgs} and~\eqref{eq:root_fxn_avgs}.
For convenience, we define
\begin{equation}
    \xi_i(x,y) = 
        \begin{cases}
            \frac{\kappa_i q_i(x) e^{y_i}}{\sum_{k=1}^L\kappa_k q_k(x)e^{y_k}} 
                \: \text{if} \; i \leq L 
                \\
            \frac{w_{i-L}(x)}{\sum_{k=1}^L \kappa_k q_k (x) e^{y_k} }
                \: \text{if} \; i > L 
        \end{cases}.
\end{equation}
We can then write $\bar{\RFN}_i(y)$ using a $\kappa_j$-weighted sum of ergodic averages of the form
\begin{equation*}
  \bar \xi^j_i(y) = \frac{1}{N_j} \sum_{t=1}^{N_j} \xi_i(X^j_t,y). 
\end{equation*}
In the limit as $N \rightarrow \infty$, $\bar \xi^j_i(y)$ converges to
\begin{equation*}
  \xi^j_i(y) =  
        \begin{cases}
             \< \frac{\kappa_i q_i  e^{y_i}}{\sum_{k=1}^L\kappa_k q_k e^{y_k}} \>_j
            \quad \text{if} \quad i \leq L \\
            \< \frac{w_{i-L}}{\sum_{k=1}^L \kappa_k q_k  e^{y_k} } \>_j
            \quad \text{if} \quad i > L,
        \end{cases}
\end{equation*}
and $\bar \RFN$ converges to 
\begin{equation}
    \RFN_i(y) = 
        \begin{cases}
            \kappa_i - \sum_{j=1}^L \kappa_j  \< \frac{\kappa_i q_i  e^{y_i}}{\sum_{k=1}^L\kappa_k q_k e^{y_k}} \>_j
            \quad \text{if} \quad i \leq L \\
            y_i - \sum_{j=1}^L \kappa_j \< \frac{w_{i-L}}{\sum_{k=1}^L \kappa_k q_k  e^{y_k} } \>_j
            \quad \text{if} \quad i > L.
        \end{cases}
\end{equation}

We assume that a central limit theorem holds for the ergodic averages $\bar \xi^j_i(y)$.
To be precise, we assume that for any fixed $y$,
\begin{equation}
    \sqrt{N}
    \left(\bar \xi(y)  - \xi(y) \right)
    \xrightarrow{d}
    \mathbf{N}(0, \Xi(y)).
    \label{eq:sample_average_clt}
\end{equation}
Here,
\begin{align*}
  \bar \xi(y) =  (\bar \xi^1_1(y), \dots, \bar \xi^1_{L+M}(y), 
                  \bar \xi^2_1(y), \dots, \bar \xi^2_{L+M}(y), \dots,  \bar \xi^L_{L+M}(y)) 
\end{align*}
is the vector of all ergodic averages, and
$\xi(y)$ is the corresponding vector of limiting values of those averages.
The covariance matrix $\Xi(y)$ can be written in block form as
\begin{equation}
    \def\arraystretch{1.5}
    \Xi(y) =
        \left[
        \begin{array}{c | c | c | c}
            \Xi^{11} (y) & \Xi^{11} (y) & \cdots     & \Xi^{1L}(y)  \\ \hline 
            \Xi^{21} (y) & \Xi^{21} (y) & \cdots     & \Xi^{2L}(y)  \\ \hline 
            \vdots       & \vdots       & \ddots     & \vdots       \\ \hline 
            \Xi^{L1} (y) & \Xi^{L1} (y) & \cdots     & \Xi^{LL}(y)  
        \end{array}
    \right]
\end{equation}
where $\Xi^{lm} \in \R^{(L+M) \times (L+M)}$ is the covariance matrix between the averages in state $l$ and those in state $m$.
One could use, for example, the results in Chapter~17 of Ref.\@~\onlinecite{meyn_markov_2009} to verify our CLT assumption~(\ref{eq:sample_average_clt}). See Ref.\@~\onlinecite{lelievre2010free} for a more detailed discussion of the CLT in the context of molecular dynamics.

The structure of $\Xi(y)$ depends on precisely how the states are sampled. 
We are interested primarily in two particular cases: (1) The $X^j_t$ are independent Markov chains and the sample fractions $\kappa_j$ may differ but do not vary with $N$. (2) The sample fractions $\kappa_j = 1/L$ are equal and $(X^1_t, \dots, X^L_t)$ is a Markov process. The first case covers umbrella sampling or alchemical calculations performed without replica exchange. The second case covers parallel tempering and replica exchange umbrella sampling.

In the first case, since the processes sampling the different states are independent, all off-diagonal blocks of $\Xi(y)$ are zero. The diagonal blocks can be expressed as 
\begin{align}
    \Xi^{ll}_{ij}(y)  &= 
  \frac{1}{\kappa_l} \bigg( \cov \left\{ \xi_i(X_t^l,y), \xi_j(X_t^l,y) \right\}\nonumber \\
  &\quad +
  2\sum_{k=1}^\infty \cov\left\{\xi_i(X_t^l,y),\xi_j(X_{t+k}^l,y)\right\} \bigg ),
\end{align}
where here we assume that the process $X^l_t$ is in stationarity as in~(\ref{eq: first integrated autocovariance formula}). The factor of $1/\kappa_l$ arises since in~(\ref{eq:sample_average_clt}) we scale by $\sqrt{N}= \sqrt{N_l/\kappa_l}$ instead of $\sqrt{N_l}$.

In the second case, the processes sampling the states are correlated, so off-diagonal blocks may be nonzero. In this case, we have 
\begin{align}
    \Xi^{lm}_{ij}(y)  &= 
  L \bigg (\cov \left\{ \xi_i(X_t^l,y), \xi_j(X_t^m,y) \right\}\nonumber \\
  &\quad +
  2\sum_{k=1}^\infty \cov\left\{\xi_i(X_t^l,y),\xi_j(X_{t+k}^m,y)\right\}\bigg ),
\end{align}
where here we assume that the joint process $(X^1_t, \dots, X^L_t)$ is in stationarity. The factor of $L$ arises since in~(\ref{eq:sample_average_clt}) we scale by $\sqrt{N}= \sqrt{L N_l}$ instead of $\sqrt{N_l}$.

Under the assumptions discussed in Section~\ref{sec:background}, the root of $\RFN_i$ will converge to 
\begin{equation}
    v = \left(f_1 , \ldots, f_L, \omega_1, \ldots, \omega_M \right),
    \label{eq:concatenated_avgs}
\end{equation}
where we have defined
\begin{equation}
    \omega_i
    =\sum_{j=1}^L \kappa_j \< \frac{w_i(X_t^j)}{\sum_{k=1}^L \kappa_k q_k (X_t^j) e^{f_k} }\>_j.
\end{equation}
The following theorem gives the rate of this convergence.
\begin{theorem}
    Assume that when $y = v$, the central limit theorem in~\eqref{eq:sample_average_clt} holds.
    Let $A \in \R^{(L+M) \times (L+M)}$ be the matrix with entries
    \begin{equation}
        A_{jl} =  \sum_{m=1}^L \sum_{n=1}^L \kappa_m \kappa_n\ \Xi_{jl}^{mn}(v).
        \label{eq:defn_A}
    \end{equation}
    Under some technical assumptions (given in Section~I of the supplement),
    \begin{equation}
        \sqrt{N}(\bar{v}-v)\xrightarrow{d}N(0,{\Gamma} A \Gamma^T),
        \label{eq:base_defn}
    \end{equation}
where $\Gamma\in \R^{(L+M) \times (L+M)}$ is a matrix that can be expressed in block form as
\begin{equation}
    \Gamma = 
        \begin{bmatrix}
            H^\# & 0 \\
            \beta H^\# & I
        \end{bmatrix}
\end{equation}
where $I$ is the $L \times L$ identity matrix and the matrices $H \in \R^{M \times M}$ and  $\beta \in \R^{L \times M}$ 
are given by
\begin{align*}
    H_{ij}
    =&\kappa_i\left(\delta_{ij} - \< \frac{ \kappa_j q_j(x) e^{f_j} }{\sum_k \kappa_k q_k(x) e^{f_k} } \>_i \right) \\
    \beta_{ij} =&  \kappa_j \< \frac{w_{i}}{\sum_k \kappa_k  q_k(x) / z_k} \>_j 
\end{align*}
and $H^\#$ is the group inverse of $H$.
\label{lem:CLT_for_v}
\end{theorem}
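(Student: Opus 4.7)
The plan is to apply a delta-method argument to the defining equations $\bar\RFN(\bar v)=0$ and $\RFN(v)=0$. The argument has three ingredients: a CLT for $\sqrt{N}\,\bar\RFN(v)$, an explicit computation of the Jacobian $\grad\RFN(v)$, and a careful treatment of the singularity of that Jacobian via the group inverse.

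The CLT is the easy part. Because $\RFN(v)=0$, direct manipulation of~\eqref{eq:root_fxn_states} and~\eqref{eq:root_fxn_avgs} gives
\[
\bar\RFN_i(v) = -\sum_{j=1}^L \kappa_j\bigl(\bar\xi^j_i(v)-\xi^j_i(v)\bigr)
\]
for every $i$, so $\sqrt{N}\,\bar\RFN(v)$ is a fixed linear combination of the vector $\sqrt{N}(\bar\xi(v)-\xi(v))$ to which assumption~\eqref{eq:sample_average_clt} applies. A short bookkeeping check shows the resulting covariance matrix is exactly the $A$ defined in~\eqref{eq:defn_A}. For the Jacobian, I would differentiate~\eqref{eq:root_fxn_states} and~\eqref{eq:root_fxn_avgs} and simplify at $y=v$ using the identity $\kappa_i q_i e^{f_i}/\sum_k\kappa_k q_k e^{f_k}=\kappa_i q_i/c_i\,/\,\qmix$. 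This yields a block lower triangular matrix whose diagonal blocks are $-H$ and $I$ and whose lower-left block is $\beta$; the upper-right zero reflects the fact that the MBAR self-consistency equations for the free energies do not depend on the observable variables $y_{L+1},\dots,y_{L+M}$.

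The main obstacle is the singularity of $H$: the top block of $\RFN$ is invariant under uniform shifts $y_k\mapsto y_k+c$ for $k\leq L$, so $H\mathbf{1}=0$. This gauge freedom is broken by the normalization~\eqref{eq:normalization_of_zs}, which is inherited by both $f$ and $\bar f$; consequently $(\bar v-v)_{1:L}$ lies in $\mathbf{1}^\perp$. A short calculation using $\sum_i\pi_i\equiv 1$ shows that $\bar\RFN(v)_{1:L}$ also lies in $\mathbf{1}^\perp$, so $H$ acts bijectively between these paired subspaces, and inversion on them is realized exactly by multiplication by $H^\#$. Solving the resulting block system for $\bar v-v$, combining with the CLT via Slutsky's theorem, and using the cited consistency $\bar v\xrightarrow{a.s.} v$ to replace $\grad\bar\RFN(\tilde v)$ by $\grad\RFN(v)$ in a mean-value expansion then assembles the claimed $\mathbf{N}(0,\Gamma A\Gamma^T)$. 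The quantitative check that the group-inverse construction remains valid uniformly in a neighborhood of $v$ — that the only singularity is the one-dimensional gauge direction, and that no further near-degeneracies obstruct the inversion — is the step I expect to carry the bulk of the technical burden and to require the technical hypotheses deferred to the supplement.
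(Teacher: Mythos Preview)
Your proposal is correct and follows essentially the same route as the paper's proof: a CLT for $\sqrt{N}\,\bar\RFN(v)$ with covariance $A$, a block lower-triangular Jacobian, the group inverse $H^\#$ to handle the one-dimensional gauge direction killed by the constraint $\sum_k f_k=0$, and Slutsky's theorem to pass to the limit. The only notable difference is that the paper uses the integral mean-value form $\bar B=\int_0^1 \bar\RFN'(s\bar v+(1-s)v)\,ds$ and builds the random generalized inverse $\bar\Gamma$ directly from $\bar B$ (so that $\bar\Gamma\bar B(\bar v-v)=\bar v-v$ holds exactly), then shows $\bar\Gamma\xrightarrow{a.s.}\Gamma$ via continuity of the group-inverse formula; this sidesteps the component-wise intermediate-point issue in your $\tilde v$ expansion but is otherwise the same argument.
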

A proof is given in the supplement.
In~\eqref{eq:base_defn} we have used the group inverse, a type of matrix pseudoinverse.
A numerical recipe for estimating $H^\#$ can be found in~\onlinecite{golub1986using}.

\subsection{CLTs and the Delta Method}\label{ssec:delta_method}

For most applications, practitioners are not interested in the values of $\bar v$ directly, but instead wish to evaluate nonlinear combinations of these terms.
To construct a CLT for these combinations, one can employ the Delta method.
\begin{lemma}[The Delta method; Proposition 6.2 in Bilodeau and Brenner\cite{bilodeau2008theory}]\label{lem:delta_method}
    Let $\theta_N$ be a sequence of random variables taking values in $\R^d$.
    Assume that a central limit theorem holds for $\theta_N$ with mean $\mu\in\R^d$ and an asymptotic covariance matrix $\Sigma \in R^{d \times d}$, i.e.
    \begin{equation}
        \sqrt{N}\left(\theta_N -\mu \right) \xrightarrow{D} \mathbf{N} \left(0, \Sigma \right).
    \end{equation}
    Let $\Phi:\R^d \to \R$ be a function that is differentiable at $\mu$. We then have the central limit theorem
    \begin{equation}
        \sqrt{N}\left(\Phi(\theta_N) - \Phi(\mu) \right) \xrightarrow{D} \mathbf{N}
            \left(0, \grad \Phi(\mu)^T \Sigma \grad \Phi(\mu) \right)
            \label{eq:delta_method_var}
    \end{equation}
    for the sequence of random variables $\Phi(\theta_N)$. 
\end{lemma}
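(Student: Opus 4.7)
The plan is to prove the Delta method by combining a first-order Taylor expansion of $\Phi$ around $\mu$ with the continuous mapping theorem and Slutsky's theorem. First I would write the definition of differentiability at $\mu$ as
\begin{equation*}
    \Phi(\theta_N) - \Phi(\mu) = \grad \Phi(\mu)^T (\theta_N - \mu) + R(\theta_N - \mu),
\end{equation*}
where the remainder satisfies $R(h)/\|h\| \to 0$ as $h \to 0$. Multiplying both sides by $\sqrt{N}$ yields
\begin{equation*}
    \sqrt{N}\bigl(\Phi(\theta_N) - \Phi(\mu)\bigr) = \grad \Phi(\mu)^T \sqrt{N}(\theta_N - \mu) + \sqrt{N}\,R(\theta_N - \mu),
\end{equation*}
and I would then analyze the two terms on the right-hand side separately.

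For the linear term, the hypothesis gives $\sqrt{N}(\theta_N - \mu) \xrightarrow{d} Z \sim \mathbf{N}(0,\Sigma)$. Since $v \mapsto \grad \Phi(\mu)^T v$ is a continuous linear map from $\R^d$ to $\R$, the continuous mapping theorem yields
\begin{equation*}
    \grad \Phi(\mu)^T \sqrt{N}(\theta_N - \mu) \xrightarrow{d} \grad \Phi(\mu)^T Z,
\end{equation*}
and $\grad \Phi(\mu)^T Z$ is a scalar normal random variable with mean $0$ and variance $\grad \Phi(\mu)^T \Sigma \grad \Phi(\mu)$, which is the claimed limiting distribution.

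For the remainder, I would show that $\sqrt{N}\,R(\theta_N - \mu) \xrightarrow{p} 0$. Convergence in distribution of $\sqrt{N}(\theta_N - \mu)$ implies $\theta_N - \mu = O_p(1/\sqrt{N})$, and in particular $\theta_N - \mu \xrightarrow{p} 0$. Defining the auxiliary function $\psi(h) = R(h)/\|h\|$ for $h \neq 0$ and $\psi(0) = 0$, the definition of differentiability at $\mu$ says precisely that $\psi$ is continuous at $0$, so by the continuous mapping theorem $\psi(\theta_N - \mu) \xrightarrow{p} 0$. Writing
\begin{equation*}
    \sqrt{N}\,R(\theta_N - \mu) = \psi(\theta_N - \mu)\,\bigl\|\sqrt{N}(\theta_N - \mu)\bigr\|,
\end{equation*}
the second factor is $O_p(1)$ and the first is $o_p(1)$, so the product converges in probability to zero. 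Finally, applying Slutsky's theorem to combine the distributional limit of the linear term with the probability limit of the remainder yields the stated CLT.

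The main obstacle I anticipate is handling the remainder cleanly under the weak hypothesis that $\Phi$ is merely differentiable (not $C^1$) at $\mu$. The issue is that the formal ratio $R(h)/\|h\|$ is undefined at $h = 0$, which matters because $\theta_N$ may equal $\mu$ with positive probability. The workaround above (extending $\psi$ by zero at the origin and invoking continuity at $0$ via the definition of differentiability) sidesteps this, but it is the one step where one must be careful rather than wave hands, since the usual informal argument ``$R(h) = o(\|h\|)$, so $\sqrt{N} R = o(\sqrt{N}\|h\|) = o_p(1)$'' implicitly assumes the ratio is well-defined along the sequence.
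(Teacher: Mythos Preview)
Your argument is correct and is the standard textbook proof of the Delta method. Note, however, that the paper does not actually prove this lemma: it is stated as Proposition~6.2 of Bilodeau and Brenner and invoked as a known result, so there is no proof in the paper to compare against.
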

To apply Lemma~\ref{lem:delta_method}, we set $\mu$ and $\theta_N$ to be $v$ and $\bar{v}$ respectively and set $\Phi$ to be the function taking $v$ to our quantity of interest.
For instance, to construct a CLT for an estimate of $\Delta G_{\alpha\to \beta}$ constructed using~\eqref{eq:mbar_state_fe_diff},
we set 
$\Phi(y)=y_L - y_1$,
and $\grad \Phi (\mu)$ is given by
\begin{equation}
    \grad \Phi (\mu) = \left(-1,0, \ldots, 0, 1, 0, \dots\right)^T.
    \label{eq:alchem_fe_gradient}
\end{equation}
We then substitute into~\eqref{eq:delta_method_var} to get the asymptotic variance of our estimate of $\Delta G_{\alpha\to\beta}$.
Similarly, for~\eqref{eq:us_average}, we set
$w_{L+1} = g q $ and $w_{L+2} = q $,
and~\eqref{eq:us_average} is recovered by setting $\Phi(y)=  \frac{y_{L+1}}{y_{L+2}}$.
Then $\grad \Phi(\mu)$ is zero apart from the last two entries, which are given by
\begin{align}
    \grad \Phi (\mu)_{L+1} =&
        \frac{1}{\omega_{L+2} } \nonumber \\
    \grad \Phi (\mu)_{L+2} =&
        -\frac{\omega_{L+1}}{\omega_{L+2}^2}
\end{align}
respectively.
As a final example, we consider the construction of error estimates of free energy differences estimated using umbrella sampling.
We set $w_{L+1}=\1_A q$ and $w_{L+2} = \1_B q $, and $\Phi(\mu) = -\log (w_{L+2}) + \log (w_{L+1})$.
Then $\grad \Phi(\mu)$ is again zero apart from the last two entries, which are
\begin{align}
    \grad \Phi (\mu)_{L+1} =&
        -\frac{1}{\omega_{L+1} } \nonumber \\
    \grad \Phi (\mu)_{L+2} =&
        \frac{1}{\omega_{L+2}}
        \label{eq:us_fe_gradient}
\end{align}

Consequently, we can combine lemma~\ref{lem:delta_method} with theorem~\ref{lem:CLT_for_v} to have a CLT for MBAR estimates.
\begin{theorem}
    Let $\Obs$ be an observable whose MBAR estimate $\bar{\Obs}$ is constructed by applying a function $\Phi:\R^{L+M} \to \R$ to the vector $\bar{v}$, and assume that $\Phi$ is differentiable at $v$.
    The estimate $\bar{\Obs}$ then obeys
    \begin{equation}
        \sqrt{N}\left(\bar{\Obs} - \Obs \right) \xrightarrow{D} N\left(0, \mathcal{A} \right).
        \label{eq:main_clt}
    \end{equation}
    where the asymptotic covariance matrix $\mathcal{A}$ is given by
    \begin{equation}
        \MBARCov = \grad \Phi^T(v)  \Gamma A \Gamma^T \grad \Phi(v) 
        \label{eq:mbar_cov_matrix}
    \end{equation}
\end{theorem}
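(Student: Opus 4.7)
The plan is a short composition: invoke Theorem \ref{lem:CLT_for_v} to obtain the joint CLT for $\bar{v}$, then push it forward through $\Phi$ by the Delta method (Lemma \ref{lem:delta_method}). Almost all of the analytic work has already been absorbed into Theorem \ref{lem:CLT_for_v}, so the proof is essentially a one-step composition.

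First, I would note that Theorem \ref{lem:CLT_for_v}, under its stated hypotheses, already supplies
$$\sqrt{N}(\bar v - v) \xrightarrow{d} \mathbf{N}(0, \Gamma A \Gamma^T).$$
By construction of the estimator, $\bar{\Obs} = \Phi(\bar v)$ and $\Obs = \Phi(v)$, and the hypothesis that $\Phi$ is differentiable at $v$ is precisely the regularity condition required by Lemma \ref{lem:delta_method}.

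Next, I would apply the Delta method with $\theta_N = \bar v$, $\mu = v$, and $\Sigma = \Gamma A \Gamma^T$, which yields
$$\sqrt{N}\bigl(\Phi(\bar v) - \Phi(v)\bigr) \xrightarrow{d} \mathbf{N}\bigl(0,\ \grad \Phi(v)^T \Gamma A \Gamma^T \grad \Phi(v)\bigr).$$
This is exactly the CLT \eqref{eq:main_clt} with $\MBARCov$ defined as in \eqref{eq:mbar_cov_matrix}, completing the argument.

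No genuine obstacle remains for this theorem in isolation: the technical content is concentrated in Theorem \ref{lem:CLT_for_v}, whose assumptions (the ergodic-average CLT \eqref{eq:sample_average_clt}, invertibility conditions underlying the group inverse $H^\#$, and almost-sure convergence of $\bar v$ to the unique root $v$ of $\RFN$) are deferred to the supplement. One subtlety I would flag for later sections, though it is not required to state the theorem, is that $\grad \Phi$ is evaluated at the unknown true value $v$, so any usable error estimator that plugs in $\grad \Phi(\bar v)$ implicitly invokes Slutsky's theorem together with continuity of $\grad \Phi$ in a neighborhood of $v$.
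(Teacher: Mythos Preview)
Your proposal is correct and matches the paper's own proof essentially verbatim: the paper simply states that the result follows immediately by applying Lemma~\ref{lem:delta_method} to Theorem~\ref{lem:CLT_for_v}. Your additional remark about evaluating $\grad \Phi$ at $\bar v$ via Slutsky is a reasonable side note but, as you say, not needed for the theorem itself.
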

\begin{proof}
    The proof follows immediately by applying \ref{lem:delta_method} to Theorem~\ref{lem:CLT_for_v}.
\end{proof}

\subsection{Computationally Estimating the Asymptotic Variance}\label{ssec:computational_procedure}
In practice, one could directly estimate asymptotic variances for observables by individually estimating each of the matrices and vectors in~\eqref{eq:main_clt}.
However, directly evaluating $A$ would require first populating the $\Xi$ matrix, which would in turn require evaluating as many as $L^2(L+M)^2$ correlation functions.
Consequently, we provide simplified formulas for evaluating the asymptotic variance of observables
in the specific case where sampling is performed independently in every state.
In Section~II of the supplement, we give analogous formulas for schemes such as Parallel Tempering and Replica Exchange Umbrella Sampling 
in which all states are sampled jointly using a single Markov chain.

If each state is sampled independently, then $\Xi^{lm}$ is zero for $l\neq m$, eliminating one of the sums in~\eqref{eq:defn_A}.
In Section~II of the supplement, we show that by moving the remaining sum to the outside
and bringing the remaining terms inside the expectation
we can rewrite the integrated covariance in~\eqref{eq:main_clt} as
\begin{align}
    \MBARCov= &\sum_{k=1}^L \cov \left\{ \chi_i (X_t^k), \chi_j (X_t^k) \right\}   \nonumber \\
             &+\sum_{k=1}^L 2\sum_{\tau=1}^\infty \cov\left\{\chi_i(X_t^k),\chi_j(X_{t+\tau}^k)\right\} 
            \label{eq:practical_acovar}
\end{align}
where 
\begin{equation}
    \chi_i(x) = 
    \sum_{i=1}^{L+M} \sqrt{\kappa_i} \xi_i(x, v) \left({\Gamma}^T  \grad \Phi\right)_{ij}(v)
    \label{eq:err_trajectory}
\end{equation}
To construct an estimate of the asymptotic variance, we first replace $\nu$ in~\eqref{eq:err_trajectory} with the MBAR  estimate $\bar \nu$ from sampled data and then estimate the integrated autocovariance of the resulting trajectory.
This integrated autocovariance can be estimated using standard methods. In this work we employ the ACOR algorithm\cite{acor2014}.
Moreover, since each summand in~\eqref{eq:practical_acovar} depends only on the sampling in state $i$,
we can interpret the integrated autocovariance of $\chi_i$ as accounting for how much state $i$ contributes to the total error.
A Python code implementing this algorithm for estimating asymptotic error can be found in the EMUS repository\cite{emus}.
\section{Applications}
We demonstrate our error estimator on two test cases: an alchemical free energy calculation and an umbrella sampling calculation. 
\subsection{Alchemical calculation of the free energy of solvating methane in water}
The solvation free energy of methane can be determined via an alchemical simulation process in which the interaction between methane and a bath of water molecules is introduced gradually.
We interpolate between the two states using \eqref{eq:alchemical_interpolation}, setting $H_\alpha$ to the Hamiltonian where the methane molecule and the water do not interact, and $H_\beta$ to the Hamiltonian where they interact fully.
We then estimate the free energy difference between the two states using~\eqref{eq:mbar_state_fe_diff} and estimate the asymptotic variance as described in Subsection~\ref{ssec:computational_procedure}.

We performed 20 independent alchemical simulations at 298 K using GROMACS version 2019.4\cite{lindahl_2019_3460414}, the OPLS-AA force field\cite{robertson2015improved}, and the TIP3P water model\cite{jorgensen1983comparison}. A total of 21 equidistant $\lambda$ values from 0 to 1 (endpoints included) were chosen. Each state was equilibrated at constant volume and then at constant pressure of 1 bar for 100 ps using the Parinello-Rahman barostat with a time constant of 1 ps.  The state was then further sampled at constant pressure for 1 ns to generate 1000 data points.  The P-LINCS algorithm was used to constrain bonds to hydrogen atoms\cite{hess1997lincs,hess2008plincs}. In all simulations a stochastic Langevin dynamics integrator with a time step of 2 fs and time constant of 1 ps was used to maintain a constant temperature of 300 K.
In Figure~\ref{fig:alchem_fe}, we plot the cumulative free energy change between states as well as the free energy difference between successive states.
\begin{figure}
  \includegraphics[width=1\linewidth]{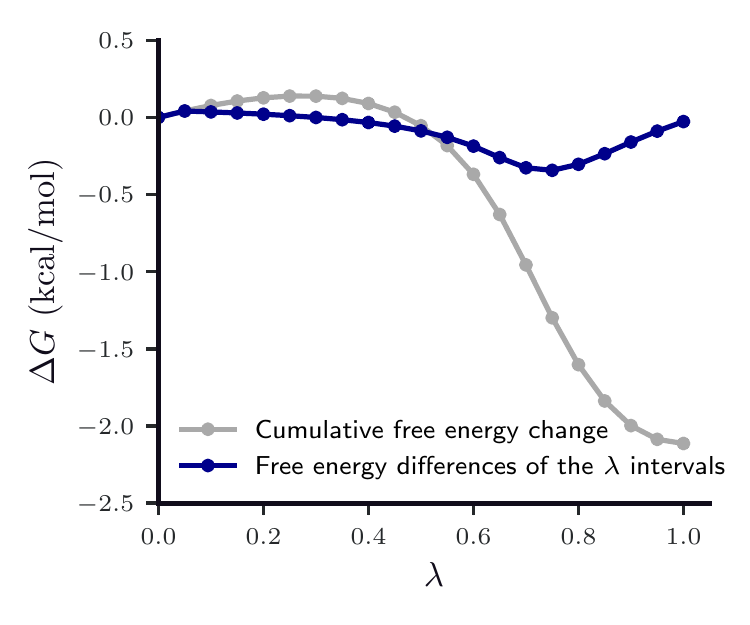}
  \caption{The free energy of solvating methane in water computed from alchemical simulations. The blue line indicates free energy differences between neighboring states, and the gray line is the cumulative free energy changes. The total free energy of solvation, i.e., the cumulative $\Delta G$ at $\lambda=1$, is estimated to be $2.13$~{kcal/mol}.}
  \label{fig:alchem_fe}
\end{figure}

\begin{figure*}
    \includegraphics[scale=1.0]{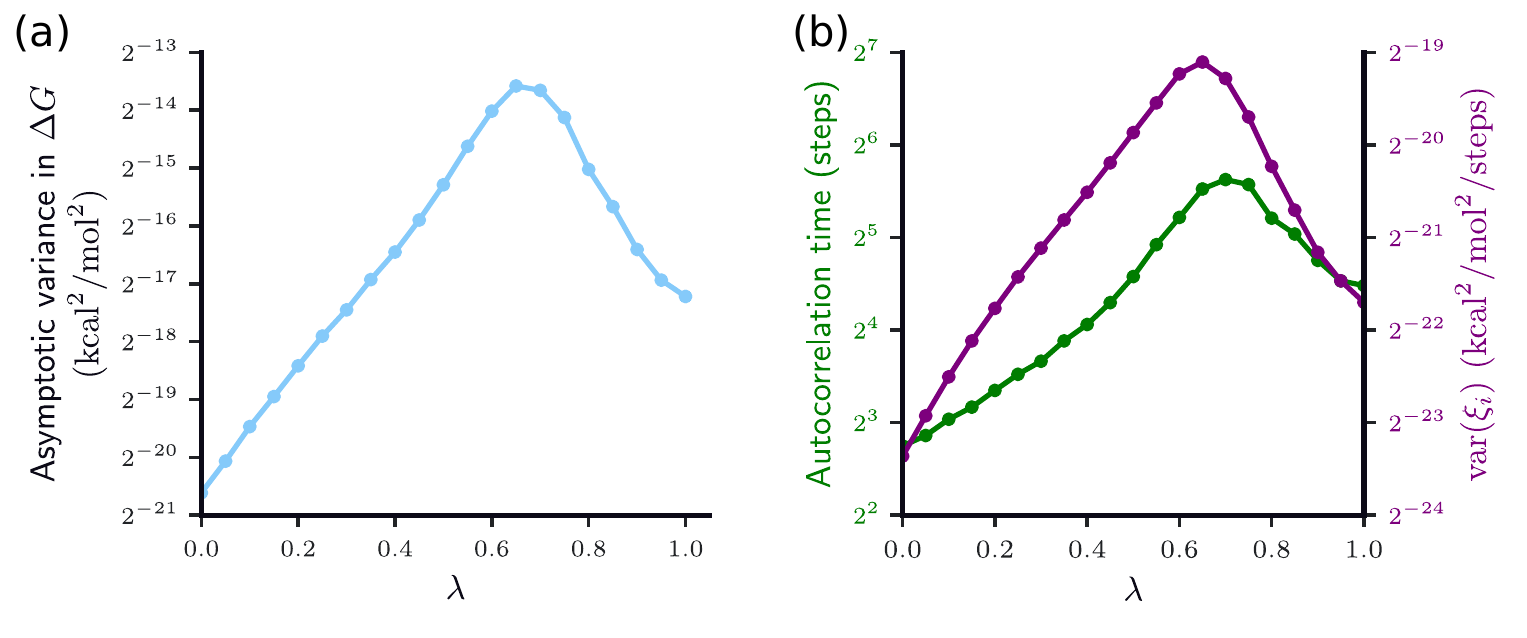}
    \caption{Analysis of the estimated error in the free energy of solvation for methane.
        (a) State contributions to the asymptotic variance of the free energy difference between the initial ($\lambda=0$) and final states ($\lambda=1$).
        (b) Breakdown of the error contributions into the integrated autocorrelation times and the variance of $\xi_i$.
        For ease of comparison, quantities are plotted on a logarithmic scale.
        }
    \label{fig:alchem_err}
\end{figure*}
The total asymptotic standard deviation in the solvation free energy is estimated to be $0.0216\pm0.0005$ kcal/mol using~\eqref{eq:practical_acovar} with $\grad \Phi$ given by~\eqref{eq:alchem_fe_gradient}.
This is close to the standard deviation over all trials which is equal to $0.0225$ kcal/mol. 
The error contributions from all states are shown in Figure \ref{fig:alchem_err}a.
As the error contributions for different states can vary by more than two orders of magnitude, 
we have chosen to depict them on a logarithmic scale.
Moreover, comparing with Figure~\ref{fig:alchem_fe} we see that the error contributions correlate with the magnitudes of the free energy differences between neighboring states.
The fact that different states' error contributions differ by orders of magnitudes suggests that the error in alchemical free energy simulations may be dominated by a few states.
Authoritatively establishing this hypothesis would require further investigation over many alchemical simulations in a variety of settings.
However, if similar phenomena do hold generically for other alchemical simulations,
it may be possible to use error estimates to tune simulation parameters to achieve dramatic reductions in the error of MBAR estimates.
Indeed, concurrent work that attempts to allocate sampling for alchemical simulations on the fly\cite{predescu2021times} suggests that 
better allocation of computational resources can substantially reduce the error in alchemical free energy simulations.

To further examine the source of the errors in our simulation, we attempt to disentangle the effect of the dynamics used to sample the state from effects inherent to the state definition.
Recalling the definition of the integrated autocorrelation time in \eqref{eq:defn_actime} and combining it with \eqref{eq:practical_acovar}, we can further write the integrated autocovariance of each state as a product of the integrated autocorrelation time and a sampler-independent factor, namely, $\var\{\zeta_i\}$. 
In Figure~\ref{fig:alchem_err}b we plot both of the error components on a log scale: the logarithm of a state's total contribution is a sum of the two curves.  
Our results show that capturing both the sampler-independent component of the error and the integrated autocorrelation time are important for estimating the total error contribution.
Indeed, previous work has typically focused on optimizing the state parameters using solely thermodynamic 
properties; for instance, \onlinecite{shenfeld2009minimizing,pham2011identifying} for instance,  used information-geometric distances between states.
However, our results suggest that to fully capture all sources of error, such approaches must also take into account kinetic
effects from the specific choice of sampler used.
This corrobates previous work\cite{pham2012optimal} which has observed that the thermodynamical optimal choice of alchemical states may not be optimal in practice due to the resulting states having exceedingly large correlationtimes.

\subsection{Umbrella Sampling Simulation of the Alanine Dipeptide}
\begin{figure}[t]
\centering
  \includegraphics[width=1\linewidth]{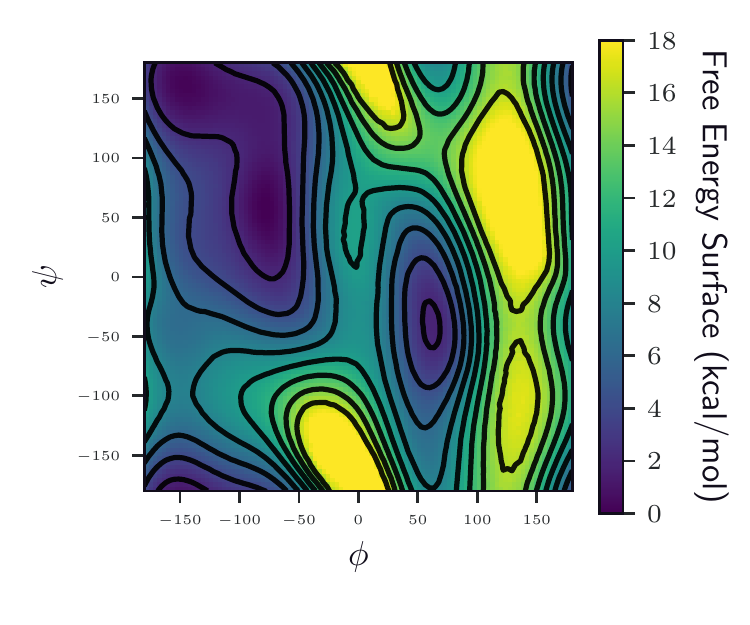}
  \caption{Free energy as a function of the $\phi$ and $\psi$ dihedral angles (measured in degrees) of the alanine dipeptide. The scale bar indicates free energy values in kcal/mol, and the contour spacing is $2$ kcal/mol. }
  \label{PMF}
\end{figure}
We also applied the error estimator to a two-dimensional umbrella sampling simulation of the alanine dipeptide ($N$-acetyl-alanyl-$N'$-methylamide) in vacuum.
We performed 15 independent umbrella sampling calculations for the free energy as a function of the $\phi$ and $\psi$ dihedral angles. 
Simulations were run at 300 K using GROMACS version 2019.4\cite{lindahl_2019_3460414} with harmonic restraints applied to  $\phi$ and $\psi$ using the PLUMED 2.5.3 software package\cite{tribello2014plumed}. The molecule was represented by the AMBER force field with bonds to hydrogen
atoms constrained by the LINCS algorithm\cite{hess1997lincs}. The force constant for the harmonic restraints was $0.0018$ $\text{kcal}$ $\text{mol}^{-1}$ $\text{degree}^{-2}$, which corresponds to a Gaussian bias function with a standard deviation of $\ang{18}$ in the absence of the molecular potential. 
We partitioned each dihedral angle into 30 intervals and placed the centers of the harmonic restraints at the centers of the cells of the resulting $30\times 30$ grid; the resulting grid ranged from $(-\ang{171}, -\ang{171})$ to $(\ang{171}, \ang{171})$.
Each state was sampled independently using the velocity Langevin dynamics integrator in GROMACS with a time step of 2 fs and a time constant of 0.1 ps. Each state was equilibrated for 40 ps and then sampled for 10 ms, with $\phi$ and $\psi$ values output every 0.4 ps. 

\begin{figure*}
    \includegraphics[scale=1.0]{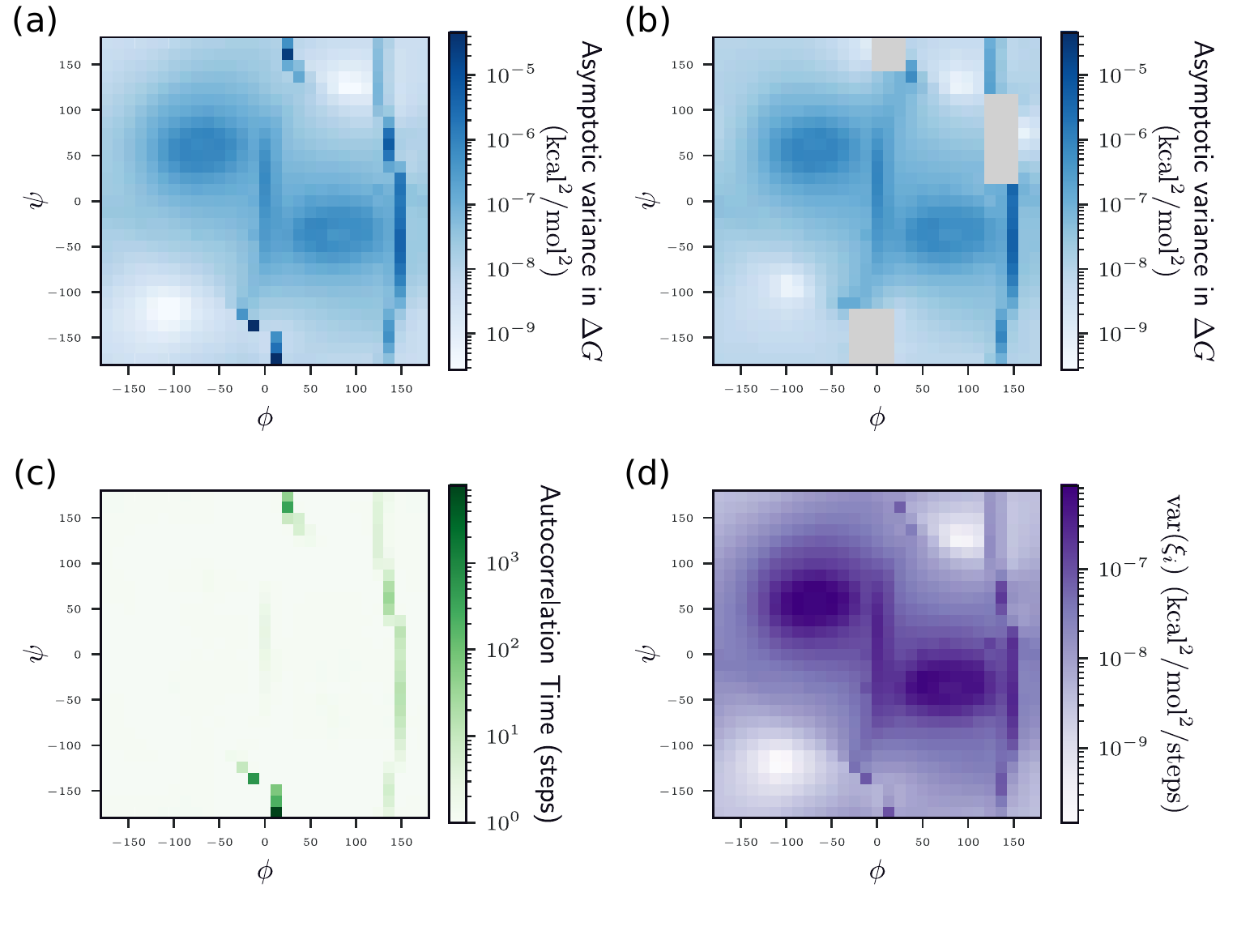}
    \caption{Analysis of the estimated error in the free energy difference between the C$_{\rm 7ax}$ and C$_{\rm 7eq}$ basins of the alanine dipeptide, shown on a logarithmic scale.
        (a) State contributions to the asymptotic variance of the free energy calculated using all states.
        (b) State contributions to the asymptotic variance of the free energy calculated using a set of states curated by analyzing the estimator for the asymptotic variance. States removed from the dataset are indicated in gray.
        (c) Integrated autocorrelation times of $\xi_i$ for the free energy estimate using every state.
        (d) Variance of $\xi_i$ for the free energy estimate using every state.
        }
        \label{fig:us_errs}
\end{figure*}

We used the MBAR equations to calculate the free energy as a function of $\phi$ and $\psi$ and, for each trial, the free energy difference between the C$_{\rm 7ax}$ and C$_{\rm 7eq}$ basins. 
We obtain the latter from the logarithm of the ratio of averages of two indicator functions. 
The C$_{\rm 7ax}$ basin is defined as the region in the $\phi\psi$-space enclosed by a circle of radius $\ang{10}$ centered at  $(\ang{65}, \ang{-40})$. The C$_{\rm 7eq}$ basin is similarly defined by the circle of radius $\ang{10}$ centered at $(\ang{-75}, \ang{50})$.

In Table~\ref{tab:asy_err} we give our estimate of the error in the free energy difference evaluated as the asymptotic standard deviation (square root of the asymptotic error), as evaluated by~\eqref{eq:practical_acovar}, with $\grad Phi$ given by~\eqref{eq:us_fe_gradient}.  For comparison, we also give an estimate of the standard deviation calculated over 15 identical replicates.
Our error estimate underestimates the true error by a factor of three, which we attribute to the difficulty in estimating the autocorrelation time accurately in the states.
Comparing the plots of the integrated autocorrelation times in Figure~\ref{fig:us_errs}c and the error contributions in Figure~\ref{fig:us_errs}d 
to the free energy surface in Figure~\ref{PMF}, we see that the error estimate is dominated by states in high-free energy regions that have long autocorrelation times.
These states are located in regions where the alanine dipeptide is highly contorted,
causing the free energy to increase sharply and complicating the task of sampling.
For instance, states near free energy maxima can be bifurcated by a steep peak in the free energy.
Sampling the state requires rare barrier-crossing events resulting in long autocorrelation times.

The fact that these states contribute considerably to our error estimate confirms the intuition that sampling unphysical regions can pollute free energy estimates.
Indeed, this intuition has been the basis of a previous adaptive umbrella sampling algorithm \cite{wojtas2013self}. 
To validate this idea, we removed several high-free energy states and recalculated both the estimate of the asymptotic error in the free energy difference as well as the standard deviation over the 15 replicates.
We give the results in Table~\ref{tab:asy_err} and plot the resulting error contributions in Figure~\ref{fig:us_errs}b with the removed states marked in gray.
We see that the free energy estimate with high-free energy states filtered out has a lower error by a factor of three, despite the decrease in the amount of data.
While we leave a systematic procedure for removing states for future work, 
our preliminary results, along with our earlier work on the Eigenvector Method for Umbrella Sampling and its application \cite{thiede2016eigenvector,dinner2020stratification,antoszewski2020insulin}, demonstrate that our error estimates have the potential to
improve the error of umbrella sampling and other multistate methods.

\begin{table*}[t]
\caption{Asymptotic standard deviation of the free energy between the C$_{\rm 7ax}$ and C$_{\rm 7eq}$ basins of the alanine dipeptide compared with the standard deviation over 15 independent simulations. The top row uses all states in the data set, while the bottom row uses insight from the error estimator to remove problematic states.}
\begin{tabularx}{0.8\textwidth}{YYY}
\toprule
States Used & Estimated Asymptotic SD  &  SD over Replicates\\
 & (kcal/mol)  &   (kcal/mol) \\

\midrule
All & $0.0308\pm 0.0063$     &   $0.1125$              \\
Curated  & $0.0100\pm 0.0001$               & $0.0377$  \\\hline
\end{tabularx}
\label{tab:asy_err}
\end{table*}

\section{Conclusions}
We derive a central limit theorem for estimates of both the normalization constants and function averages of the MBAR estimator. 
The central limit theorem allows us to devise a computational procedure for estimating the asymptotic error for arbitrary observables calculated through MBAR.
In particular, it allows us to estimate the asymptotic error in free energy calculations. 
Notably, if states are sampled independently, the analytical expression of the total asymptotic error takes the form of a sum of contributions from all states.  
This enables us to trace how error in individual Markov chains contribute to the total error.
Moreover, unlike previous error estimators, our approach explicitly accounts for the effects of correlation within each sampled trajectory.

We demonstrate the error estimator for an alchemical calculation of the solvation free energy of methane and a two-dimensional umbrella sampling calculation of the free energy of rearrangement of the alanine dipeptide. 
In both cases, the asymptotic error estimates agree reasonably well with the true error over all replicates. 
Moreover, for both our alchemical calculation and the umbrella sampling calculation we observe that the differences in the error contribution between states correlates strongly with the autocorrelation time of the associated states.
For instance, for our umbrella sampling calculation, we observe that the error contributions are dominated by states in high-free energy regions in which dynamical variables decorrelate slowly.
Upon the removal of those states, we see a significant error reduction.
These results highlight the importance of error analysis that accounts for correlation in a sampled trajectory.
We hope to investigate adaptive sampling strategies based on our error estimates in future work.
\section{Acknowledgements}
We wish to thank Jonathan Weare for advice and for pointing out a simplification in our error analyses, and Adam Antoszewski for helpful discussions.  This work was supported by National Institutes of Health award R35 GM136381 and National Science Foundation awards DMS-2054306 and DMS-2012207.  The Flatiron Institute is a  division of the Simons Foundation.


\bibliography{aipsamp}

\pagebreak
\widetext
\begin{center}
    \textbf{\large Supplement for Understanding the Sources of Error in MBAR through Asymptotic Analysis}
\end{center}
\setcounter{equation}{0}
\setcounter{figure}{0}
\setcounter{theorem}{0}
\setcounter{table}{0}
\setcounter{section}{0}
\setcounter{page}{1}
\makeatletter
\renewcommand{\theequation}{S\arabic{equation}}
\renewcommand{\thesection}{S\arabic{section}}
\renewcommand{\thefigure}{S\arabic{figure}}

\section{Derivation of the Asymptotic Variance of MBAR}

We prove that the MBAR estimator is asymptotically normal, and we give a convenient formula for the asymptotic variance.
Our argument is based on the work of~\onlinecite{geyer1994estimating}.
Our contribution is to fill in details missing from Geyer's manuscript and to correct errors.
Most importantly, the formula for the asymptotic variance of observable averages $\< g \>$ is not correct in the Geyer's manuscript.

\subsection{The Maximum (Pseudo)likelihood Perspective}

Following~\onlinecite{geyer1994estimating}, we first introduce the MBAR estimate as the maximizer of a pseudolikelihood.
    We introduce the (absolute) free energy,
\begin{equation*}
  \eta_j = - \log(c_j)
\end{equation*}
Define for any $\VFS \in \R^L$,
\begin{align}
  \qmix (x,\VFS) &= 
              \sum_{k=1}^L \kappa_k q_k(x) \exp(\VFS_k), \text{ and } 
              \label{eq:defn_mixture_dist}
              \\
  p_j(x,\VFS) &=\frac{ \kappa_j q_j(x) \exp(\VFS_j)}{\qmix(x,\VFS)}.
\end{align}
We can then define the pseudolikelihood
\begin{equation*}
  l_N(\VFS) = \frac{1}{N} \sum_{j=1}^L \sum_{t=1}^{N_j} \log p_j(X^j_t,\VFS)
\end{equation*}
based on a formal similarity with a logistic regression model.
The MBAR estimating equation for $\bar \eta$ is equivalent with the first order optimality condition
\begin{equation}\label{eq: optimality}
  \nabla l_N (\bar \eta) = 0
\end{equation}
for this pseudolikelihood. 
We refer to~\cite{gill_large_1988,vardi1985empirical,geyer1994estimating,kong2003theory} for more explanation of this perspective and for detailed derivations.
Here, we merely observe that if samples are collected IID in each state then $l_N$ is the likelihood of the data given a particular value $\VFS$ of the free energy. We note that this is not true for other sampling strategies where the samples are not IID, but one can still use~\eqref{eq: optimality} to estimate $\eta$ even in that case.
When written in maximum likelihood form and in terms of the variable $\eta$, the estimating equations have a very convenient structure. In particular, $l_N$ is concave and the Fisher information $\nabla^2 l_N$ is a generator matrix. We will exploit aspects of this structure below.

Observe that $l_N$ is unchanged if one adds a constant to $\VFS$, so~\eqref{eq: optimality} determines $\bar \eta$ only up to an additive constant. 
Indeed, we chose to work with estimates of the relative free energies, which obey the constraint
\begin{equation}
  \label{eqn: constraint}
  \sum_{k=1}^L \bar f_k=0.
\end{equation}
With this particular constraint, $\bar f$ is not exactly an estimator of $\eta$, since we may have $\sum_{j=1}^L \eta_j \neq 0$. 
Instead, $\bar f$ converges to
\begin{equation*}
  f = \eta - \sum_{k=1}^L \eta_k.
\end{equation*}
We note that this arbitrary normalization of the free energies does not cause difficulties, since the MBAR formulas for observable averages are unchanged if one adds a constant to $\bar f$ and only free energy differences are important for most applications.

When the constraint~\eqref{eqn: constraint} is imposed, one can show that if the matrix
\begin{equation*}
\bar M_{ij} = \frac{1}{N_j} \sum_{t=1}^{N_j} q_i(X^j_t)
\end{equation*}
is irreducible, then~\eqref{eq: optimality} determines a unique $\bar f$, cf.\@ Theorem 1 in Ref.\@~\onlinecite{geyer1994estimating} and Theorem~1.1 in Ref.\@~\onlinecite{gill_large_1988}. Moreover, if $M_{ij} = \langle q_i \rangle_j$ is irreducible, then (a) with probability one $\bar M$ is irreducible for sufficiently large $N$ (cf.\@ Theorem~1 in Ref.\@~\onlinecite{geyer1994estimating} or Proposition~1.1 in Ref.\@~\onlinecite{gill_large_1988}) and (b) $\bar f \xrightarrow{as} f$ (cf.\@ Theorem~1 in Ref.\@~\onlinecite{geyer1994estimating}).  
When  $M_{ij} = \langle q_i \rangle_j$ is irreducible, we say that the densities $q_1, \dots, q_L$ are inseparable.
We will assume inseparability throughout the remainder of this work:
\begin{assumption}\label{asm: inseparable}
  We assume that the densities $q_1, \dots, q_L$ are inseparable, i.e.\@ the matrix $M_{ij} = \langle q_i \rangle_j$ is irreducible.
\end{assumption}

Once estimates of the free energies $\bar f$ have been constructed, averages of observables are estimated in terms of quantities of the form 
\begin{equation}\label{eq: definition of bar omega}
    \bar{\omega}_i (\bar f)
    = \sum_{k=1}^L \kappa_k \frac{1}{N_k} \sum_{t=1}^{N_k} \frac{w_i(X^k_t)}{\qmix(X^k_t,\bar f)}
\end{equation}
for some functions $w_1, \dots, w_M$, as explained in the text. 

\subsection{Outline of Proof of Asymptotic Normality}

We will prove asymptotic normality of $(\bar f, \bar \omega(\bar f))$.
To do so, it is expedient to define the function $\bar \RFN: \R^L \times \R^M \rightarrow \R^L \times \R^M$ by 
\begin{equation*}
  \bar \RFN(\VFS,\VAS) = (\nabla l_N(\VFS),\VAS - \bar \omega(\VFS)), 
\end{equation*}
where $\bar \omega(\VFS)$ is defined by~(\ref{eq: definition of bar omega}) but with $\VFS$ in place of $\bar f$.
Under Assumption~\ref{asm: inseparable}, $(\bar f,\bar{\omega}(\bar f))$ is the unique root of $\bar \RFN$ for $N$ sufficiently large. We will proceed as follows: First, we show that for any fixed $(\VFS,\VAS) \in \R^L \times \R^M$, a LLN and CLT hold for $\bar \RFN(\VFS,\VAS)$, so
\begin{align*}
  \bar \RFN(\VFS,\VAS) &\xrightarrow{as} \RFN(\VFS,\VAS)
\end{align*}
and
\begin{align*}
\sqrt{N} (\bar \RFN(\VFS,\VAS) - \RFN(\VFS,\VAS) ) &\xrightarrow{d} N (0, A(\VFS,\VAS))
\end{align*}
for some limiting value $\RFN(\VFS,\VAS)$ and covariance matrix $A(\VFS,\VAS)$. Second, we compute a Taylor expansion of roughly the form
\begin{equation*}
    \bar \RFN(\bar f, \bar{\omega}(\bar f)) - \bar \RFN(f,\omega) = - \bar \RFN(f,\omega) = \bar \RFN'(f,\omega) ((\bar f, \bar{\omega}(\bar f)) - (f,\omega)) + \text{ higher order terms},
  \end{equation*}
where $\omega$ is the limiting value of $\bar \omega(f)$ defined in~\eqref{eqn: limits of bar w}.
Third, we derive a matrix $\bar \Gamma$ that acts as a generalized inverse for $\bar \RFN'$ in the Taylor expansion, which gives
\begin{equation*}
  - \bar \Gamma \bar \RFN(f, \omega) = (\bar f, \bar{\omega}(\bar f)) - (f, \omega) + \text{ higher order terms}.
\end{equation*}
We show that $\bar \Gamma$ converges almost surely to a limit $\Gamma$, and we give an explicit formula for $\Gamma$.
Finally, we conclude by Slutsky's theorem and the CLT for $\bar \RFN(\VFS,\VAS)$ that 
\begin{equation*}
  \sqrt{N} ((\bar f, \bar{\omega}(\bar f)) - (f, \omega)) = -\bar \Gamma \sqrt{N} (\bar \RFN(f, \omega) - \RFN(f, \omega)) + \text{higher order terms} \xrightarrow{d} N(0, \Gamma A(f,\omega)\Gamma^t).
\end{equation*}

\subsection{Derivatives of $l_N$ and $\bar \omega$}
\label{sec: taylor}

The proof outlined above requires the Taylor expansion of $\bar \RFN$ at $(f, \omega)$. We begin by computing the derivatives of $l_N$ and also some upper bounds on derivatives that we will use to estimate higher order terms in our expansion.
We observe that
\begin{align*}
  \frac{\partial p_j(x,\VFS)}{\partial y_l} &= \frac{\partial}{\partial y_l} \frac{\kappa_j q_j(x) \exp(\VFS_j)}{\sum_{k=1}^L \kappa_k q_k(x) \exp(\VFS_k)} \\
  &=  p_j(x,\VFS) (\delta_{lj} - p_l(x,\VFS)).  
\end{align*}
Therefore,
\begin{align*}
  \frac{\partial l_N}{\partial y_l}(\VFS) &= \kappa_l - \frac1N \sum_{k=1}^L \sum_{t=1}^{N_k}  p_l(X^k_t,\VFS), \\
  \frac{\partial l_N}{\partial y_l \partial y_m}(\VFS) &=
  \begin{cases}
    - \frac1N \sum_{k=1}^L \sum_{t=1}^{N_k}  p_l(X^k_t,\VFS)(1- p_l(X^k_t,\VFS)) &\text{ if } l = m \\
    \frac1N \sum_{k=1}^L \sum_{t=1}^{N_k}  p_l(X^k_t,\VFS)p_m(X^k_t,\VFS) &\text{ if } l \neq m.
  \end{cases}                                                                     
\end{align*}
Moreover, one sees immediately from the analogous formula for third derivatives (and using $0 \leq p_l(x, \VFS) \leq 1$) that
\begin{equation}
  \label{eqn: bound on third derivatives}
\left \lvert \frac{\partial l_N}{\partial y_l \partial y_m \partial y_p}(\VFS) \right \rvert \leq 2.
\end{equation}
This will allow us to control the error terms of the Taylor expansion.

We also require derivatives of the $\bar{\omega}_i$'s. Define for any $\VFS \in \R^L$,
\begin{equation*}
  \bar{\omega}_i(\VFS) = \frac1N \sum_{k=1}^L \sum_{t=1}^{N_K} \frac{w_i(X^k_t)}{\qmix(X^k_t,\VFS)}.
\end{equation*}
We have
\begin{align*}
  \frac{\partial \bar{\omega}_i}{\partial y_l}(\VFS) &= -\frac{1}{N} \sum_{k=1}^L \sum_{t=1}^{N_K} \frac{w_i(X^k_t)}{\qmix(X^k_t,\VFS)} p_l(X^k_t,\VFS), \\
  \frac{\partial \bar{\omega}_i}{\partial y_l y_m}(\VFS)
                                                 &= \frac{1}{N} \sum_{k=1}^L \sum_{t=1}^{N_K} \frac{w_i(X^k_t)}{\qmix(X^k_t,\VFS)} ( p_l(X^k_t,\VFS) p_m(X^k_t,\VFS) - p_l(X^k_t,\VFS) (\delta_{lm} - p_m(X^k_t,\VFS) )\\
  &= \frac{1}{N} \sum_{k=1}^L \sum_{t=1}^{N_K} \frac{w_i(X^k_t)}{\qmix(X^k_t,\VFS)}  2p_l(X^k_t,\VFS)( p_m(X^k_t,\VFS) - \delta_{lm}).
\end{align*}
Note that the second order partial derivatives are bounded independently of $N$:
\begin{equation}
  \label{eqn: bound on second derivatives of w}
   \left \lvert \frac{\partial \bar{\omega}_i}{\partial y_l \partial y_m}(\VFS) \right \rvert \leq 2 \max_x \left \lvert \frac{w_i (x)}{\qmix(x, \VFS)} \right \rvert.
\end{equation}
To control the error terms in our Taylor approximation, we will assume that the right hand side of~\eqref{eqn: bound on second derivatives of w} is bounded uniformly in $x$ for fixed $\VFS$.

\begin{assumption}\label{asm: boundedness}
  For any $\VFS \in \R^L$, there exists $C(\VFS) >0$ so that 
  \begin{equation*}
     \max_x \left \lvert \frac{w_i (x)}{\qmix(x, \VFS)} \right \rvert \leq C(\VFS)
  \end{equation*}
  for all $i = 1, \dots, M$.
\end{assumption}

We expect Assumption~\ref{asm: boundedness} to hold in practice for both umbrella sampling and alchemical calculations.
First, consider umbrella sampling, where $w_i = g_i q$ for some observables $g_i$ and $q_k = \psi_k q$ for some biasing functions $\psi_k$. Typically, the observables are bounded, so for some $D>0$,
\begin{equation}\label{eq: boundedness us case 1}
 \lvert g_i (x) \rvert \leq D 
\end{equation}
for all $x$ and $i=1, \dots, M$. In addition, we may assume that for some $E>0$,
\begin{equation}\label{eq: boundedness us case 2}
   \left \lvert \frac{1}{\sum_k \psi_k(x)} \right \rvert \leq E
 \end{equation}
 for all $x$.
We note that~\eqref{eq: boundedness us case 2} holds if the biasing functions $\psi_i$ are a partition of unity and also in the other cases considered in~\onlinecite{thiede2016eigenvector} and~\cite{dinner2020stratification}. If~\eqref{eq: boundedness us case 1} and~\eqref{eq: boundedness us case 2} hold, then Assumption~\ref{asm: boundedness} holds with
 \begin{equation*}
   C(\VFS) = \frac{D E}{\min_k \exp(\VFS_k)}.
 \end{equation*}
 In the alchemical case, if we are only interested in a free energy difference, then Assumption~\ref{asm: boundedness} is irrelevant since we need only estimate $z_1/z_L$. If we also want the averages of an observable $g$ over the densities $q_i$, then we choose $w_i=gq_i$ and Assumption~\ref{asm: boundedness} holds with
 \begin{equation*}
   C(\VFS) = \frac{\max_x \lvert g(x) \rvert}{\min_k \exp(\VFS_k)},
 \end{equation*}
 assuming that $g$ is bounded.

\subsection{Ergodicity and the Central Limit Theorem for $\bar \RFN$} 

 Observe that for any fixed $\VFS \in \R^L$, $l_N(\VFS)$, $\bar{\omega}(\VFS)$, and their derivatives are all linear combinations of ergodic averages of fixed functions over the processes $X^i_t$. We assume that these ergodic averages converge as $N \rightarrow \infty$.
 
\begin{assumption}\label{asm: ergodicity}
  For any bounded measurable $g: \R^n \rightarrow \R^m$, we have
  \begin{equation*}
    \frac{1}{N} \sum_{j=1}^L \sum_{t=1}^{N_j} g(X^j_t) \xrightarrow{as} \E[g(X)],
  \end{equation*}
  where $X$ is a random variable with density proportional to $\qmix(\cdot, f)$.
\end{assumption}

If $\kappa_i = N_i/N$ is independent of $N$,
Assumption~\ref{asm: ergodicity} holds as long as each of the chains $X^j_t$ is ergodic. (In fact, it is enough that $\limsup_{N \rightarrow \infty} \kappa_i >0$ for all $i$. We will not discuss this more general possibility, but see the assumptions made in Ref.\@~\cite{geyer1994estimating} and Ref.\@~\cite{gill_large_1988}.)

Observe that for any fixed $\VFS \in \R^L$, $p_l(x, \VFS)$ is a bounded and measurable function of $x$. 
Therefore, by Assumption~\ref{asm: ergodicity} and the formulas derived in Section~\ref{sec: taylor}, for any fixed $\VFS \in \R^L$, 
\begin{equation}
\label{eqn: limits of derivatives of likelihood} 
  \begin{split}
    \frac{\partial l_N}{\partial f_l}(\VFS) \xrightarrow{as} G_l (\VFS) &= \kappa_l - \E[p_l(X,\VFS)],\\
    \frac{\partial l_N}{\partial f_l \partial f_m}(\VFS) \xrightarrow{as} H_{lm} (\VFS) &=                                                                                  \begin{cases}                                                                          - \E [p_l(X,\VFS)(1- p_l(X,\VFS))] &\text{ if } l = m \\                              \E[ p_l(X,\VFS)p_m(X,\VFS) ] &\text{ if } l \neq m,                            \end{cases}
  \end{split}
\end{equation}
where $X$ is a random variable with density proportional to $\qmix(\cdot, f)$.
Similarly,
\begin{equation}
\label{eqn: limits of bar w}
  \begin{split}
  \bar{\omega}_i(\VFS) \xrightarrow{a.s.} \omega_i(\VFS) &= \E \left [ \frac{w_i(X)}{\qmix(X,\VFS)}\right ],\\
  \frac{\partial \bar{\omega}_i}{\partial f_l}(\VFS) \xrightarrow{a.s.} dw_{il}(\VFS) &= \E \left [ \frac{w_i(X)}{\qmix(X,\VFS)} p_l(X,\VFS)\right ],
\end{split}
\end{equation}
where again $X$ is a random variable with density proportional to $\qmix(\cdot, f)$. It follows that
\begin{equation*}
  \bar \RFN(\VFS,\VAS) \xrightarrow{as} \RFN(\VFS,\VAS) = (G(\VFS),\VAS-\omega(\VFS)). 
\end{equation*}

We also assume that a central limit theorem holds.
\begin{assumption}\label{asm: clt}
  For any fixed $(\VFS,\VAS) \in \R^L \times \R^M$,
  \begin{align*}
\sqrt{N} (\bar \RFN(\VFS,\VAS) - \RFN(\VFS,\VAS) ) &\xrightarrow{d} N (0, A(\VFS,\VAS))
  \end{align*}
  for some covariance matrix $A(\VFS,\VAS) \in \R^{(L + M) \times (L + M)}$.
\end{assumption}

Assumption~\ref{asm: clt} is a consequence of assumption~(39) in the main text.
When assumption~(39) holds, $A(f, \omega)$ is the same as the matrix $A$ defined in the main text in the statement of Theorem~IV.1.

\subsection{The Linearization of $\bar \RFN$ and its Generalized Inverse}
\label{sec: linearization}

We will transform the central limit theorem for $\bar \RFN$ to a central limit theorem for $(\bar f,\bar{\omega}(\bar f))$ by applying a generalized inverse of the linearization of $\bar \RFN$ at $(f, \omega)$ . The basic idea is very similar to the delta method. In this subsection, we derive the appropriate linearization and generalized inverse.

Define
\begin{equation*}
  \bar v = (\bar f,\bar{\omega}(\bar f)) \text{ and } v = ( f, \omega).
\end{equation*}
We have
\begin{equation*}
  \bar \RFN(\bar v) - \bar \RFN(v) = -\bar \RFN(v) =  -(\bar \RFN(v)-\RFN(v)) = \bar B (\bar v- v),
\end{equation*}
where
\begin{equation*}
  \bar B = \int_{s=0}^1 \bar \RFN'(s \bar v + (1-s) v) \, ds.
\end{equation*}
(Here, we use that $\bar \RFN(\bar v) = \RFN(v)=0$.)
We will demonstrate the existence of a generalized inverse $\bar \Gamma$ of $\bar B$ so that
\begin{equation*}
  \bar \Gamma \bar B (\bar v - v) = \bar v - v,
\end{equation*}
and we will show that $\bar \Gamma$ converges almost surely to a certain matrix $\Gamma$.

To devise the right $\bar \Gamma$, we first observe that 
\begin{equation*}
  \bar \RFN'(\VFS,\VAS) =
  \begin{pmatrix}
    \nabla^2 l_N(\VFS) & 0 \\
    -\frac{\partial \bar{\omega}}{\partial y }(\VFS,\VAS) & I
  \end{pmatrix}.
\end{equation*}
Here, $\nabla^2 l_N(\VFS) \in \R^{L \times L}$ denotes the Hessian matrix of $l_N$, $\frac{\partial \bar{\omega}}{\partial y} \in \R^{M \times L}$ with $\left (\frac{\partial \bar{\omega}}{\partial y} \right )_{ij} = \frac{\partial \bar{\omega}_{i-L}}{\partial y_j}$, and $I \in \R^{M \times M}$ is the identity matrix.

We observe that for any value of $x$,
\begin{equation*}
  -\nabla^2 l_N(\VFS) = I - \bar P(\VFS),
\end{equation*}
where $\bar P$ is the symmetric, stochastic matrix
\begin{equation*}
  \bar P_{lm}(y) =
   \begin{cases}
    1 - \frac1N \sum_{k=1}^L \sum_{t=1}^{N_k}  p_l(X^k_t,\VFS)(1- p_l(X^k_t,\VFS)) &\text{ if } l = m \\
    \frac1N \sum_{k=1}^L \sum_{t=1}^{N_k}  p_l(X^k_t,\VFS)p_m(X^k_t,\VFS) &\text{ if } l \neq m.
  \end{cases}
\end{equation*}
That is, $-\nabla^2 l_N(\VFS)$ is a generator matrix.
To see that $\bar P(\VFS)$ is indeed stochastic, first observe that
\begin{equation*}
  \bar P_{ii}(\VFS) \geq \frac34,
\end{equation*}
since we have $p_l(x,\VFS)(1- p_l(x,\VFS)) \leq \frac14$ for all $x,\VFS,l$, and so the diagonal entries are positive.
The off-diagonal entries are nonnegative since $p_l(x,\VFS) \geq 0$ for all $x,\VFS,l$, and the rows sum to one since $\sum_l p_l (x,\VFS) = 1$ for all $x, \VFS$. When the $q_i$'s are inseparable, under Assumption~\ref{asm: inseparable}, $\bar P(x)$ is irreducible for sufficiently large $N$.
Moreover, $\bar P(y)$ is aperiodic since its diagonal is positive.

Now let 
\begin{equation*}
  \bar H:= \int_{s=0}^1 \nabla^2 l_N(s \bar f + (1-s) f) \, ds = \bar Q-I,
\end{equation*}
where
\begin{equation*}
 \bar Q:= \int_{s=0}^1 \bar P( s \bar f + (1-s) f) \, ds.
\end{equation*}
Note that $\bar Q$ is stochastic, and that it inherits symmetry, irreducibility, and aperiodicity from $\bar P$.
It follows from symmetry and irreducibility that $L^{-1}\1$ is the unique invariant distribution of $\bar Q$, and therefore by Theorem~5.5 in~\onlinecite{meyer1975role}
the group inverse $\bar H^\#$  of $\bar H$ exists and is given by
\begin{equation}
  \label{eqn: formula for group inverse of bar beta}
  \bar H^\# 
  = L^{-2} \1\1^t + ( \bar H-  L^{-2}\1 \1^t)^{-1}.
\end{equation}
The group inverse is a particular type of generalized inverse similar to the Moore-Penrose inverse. (In fact, the group inverse of $H$ is the same as its Moore-Penrose inverse. We prefer to call it the group inverse since we will use the  results of Meyer~\cite{meyer1975role} on group inverses of generator matrices.) The group inverse is characterized by the properties
\begin{equation*}
  \bar H\bar H^.\# = \bar H^\# \bar H, \text{ } \bar H^\# \bar H\bar H^\#  =\bar H^\#, \text{ and } \bar H\bar H^\# \bar H=\bar H.
\end{equation*}
We refer to \onlinecite{meyer1975role,golub1986using} for details.
For future reference, we also note that by Theorem 2.2 in \onlinecite{meyer1975role}, we have
\begin{equation}
  \label{eqn: formula for beta sharp time beta}
  \bar H^\# \bar H
  = I - L^{-2} \1 \1^t.
\end{equation}

We will use $\bar H^\#$ as a building block of our generalized inverse $\bar \Gamma$.
We have
\begin{equation*}
  \bar B = \int_{s=0}^1 \bar \RFN'(s \bar v + (1-s) v) \, ds = 
  \begin{pmatrix}
    \bar H& 0 \\
    \dw & I 
  \end{pmatrix},
\end{equation*}
where
\begin{equation*}
  \dw := \int_{s=0}^1 \frac{\partial \omega}{\partial y}(s \bar v + (1-s) v) \, ds.
\end{equation*}
We define
\begin{equation*}
  \bar \Gamma =
  \begin{pmatrix}
    \bar H^\# & 0 \\
    - \dw \bar H^\# & I
  \end{pmatrix}.
\end{equation*}
Observe that by~\eqref{eqn: formula for beta sharp time beta}, 
\begin{equation*}
  \bar \Gamma \bar B =
  \begin{pmatrix}
    \bar H^\# & 0 \\
    - \dw \bar H^\# & I
  \end{pmatrix}
  \begin{pmatrix}
    \bar H& 0 \\
    \dw & I 
  \end{pmatrix}
  =
  \begin{pmatrix}
    I - L^{-2} \1 \1^t & 0 \\
    \dw (I - L^{-2} \1 \1^t) & I 
  \end{pmatrix}.
\end{equation*}
Therefore, since
\begin{equation*}
  \1^t \bar f = \1^t f = 0
\end{equation*}
by the constraint~\eqref{eqn: constraint}, we have
\begin{equation}
  \label{eqn: left inverse property}
\bar \Gamma \bar B (\bar v - v) = \bar v - v.
\end{equation}

Equation~\eqref{eqn: left inverse property} is one of the desired properties of $\bar \Gamma$. We will now show in addition that
\begin{equation}
  \label{eqn: convergence of bar b prime}
  \bar \Gamma \xrightarrow{as}
  \Gamma :=
  \begin{pmatrix}
    H(f)^\# & 0 \\
    - dw(f,w) H(f)^\# & I
  \end{pmatrix}.
\end{equation}
First, we show that
\begin{equation}
  \label{eqn: convergence of bar H}
  \bar H \xrightarrow{as} H(f).
\end{equation}
We have
\begin{align*}
  \bar H_{lm} &=  h_1 + h_2, \\
\end{align*}
where
\begin{align*}
  h_1 &:= \int_{s=0}^1 \frac{\partial l_N}{\partial y_l \partial y_m}(s \bar f + (1-s) f) - \frac{\partial l_N}{\partial y_l \partial y_m}(f) \, ds, \text{ and } \\
  h_2 &:= \frac{\partial l_N}{\partial y_l \partial y_m}(f).
\end{align*}
Using the uniform bounds on third derivatives of the likelihood~\eqref{eqn: bound on third derivatives},
\begin{align*}
  \lvert h_1 \rvert & \leq 2 L \max_k \lvert \bar f_k - f_k \rvert, 
\end{align*}
and therefore
\begin{equation*}
h_1 \xrightarrow{as} 0,
\end{equation*}
since as discussed above we have $\bar f \xrightarrow{as} f$ by the results of Geyer~\cite{geyer1994estimating}.
By~\eqref{eqn: limits of derivatives of likelihood},
\begin{equation*}
  h_2 \xrightarrow{as} H(f).
\end{equation*}
Thus, \eqref{eqn: convergence of bar H} holds.
A similar argument using~\eqref{eqn: bound on second derivatives of w} and~\eqref{eqn: limits of bar w} shows that
\begin{equation*}
  \bar dw \xrightarrow{as} dw(f, \omega).
\end{equation*}

To complete the proof that $\bar \Gamma \xrightarrow{as} \Gamma$, consider formula~\eqref{eqn: formula for group inverse of bar beta} for $\bar H^\#$. The right hand side
\begin{equation*}
  L^{-2} \1\1^t + ( \bar H-  L^{-2}\1 \1^t)^{-1}
\end{equation*}
of~\eqref{eqn: formula for group inverse of bar beta} is continuous as a function of $\bar H$ over an open neighborhood of any $\bar H$ so that the inverse exists, since the matrix inverse is a continuous function on a neighborhood of any nonsingular matrix. In particular, it is continuous at $H(f)$, since $H(f)$ is stochastic, symmetric, irreducible, and aperiodic by the same argument that shows that $\bar H$ has these properties. It follows that
\begin{equation*}
  \bar H^\# \xrightarrow{as}  L^{-2} \1\1^t + ( H(f)-  L^{-2}\1 \1^t)^{-1}= H(f)^\#,
\end{equation*}
hence
\begin{equation*}
  \bar \Gamma \xrightarrow{as} \Gamma.
\end{equation*}
Combining the results of the previous sections yields the following theorem.
\begin{theorem}
  Under Assumptions~\ref{asm: inseparable},~\ref{asm: boundedness},~\ref{asm: ergodicity}, and~\ref{asm: clt}, we have
  \begin{equation}
  \sqrt{N} (\bar v- v) \xrightarrow{d} N(0,\Gamma A(v) \Gamma^t).
  \label{eq:statement_of_clt}
\end{equation}
\end{theorem}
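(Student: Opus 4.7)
Given that the preceding subsections have developed essentially all the required machinery, my plan is to combine them into a short delta-method-style argument. The guiding idea is that $\bar v = (\bar f, \bar \omega(\bar f))$ is the root of $\bar F$, a central limit theorem is assumed for $\bar F$ at the deterministic limit point $v = (f, \omega)$, and the linearization of $\bar F$ at $v$ admits a well-behaved generalized left inverse $\bar \Gamma$ that converges almost surely to a deterministic matrix $\Gamma$. Slutsky's theorem then transfers the CLT from $\bar F(v)$ to $\bar v$ with sandwich covariance $\Gamma A(v) \Gamma^t$.

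I would execute this in three steps. First, I would use the exact identity
$$\bar B(\bar v - v) = \bar F(\bar v) - \bar F(v) = -\bigl(\bar F(v) - F(v)\bigr),$$
which follows from $\bar F(\bar v) = 0$, $F(v) = 0$, and the fundamental-theorem-of-calculus representation $\bar B = \int_0^1 \bar F'(sv + (1-s)\bar v)\,ds$. Second, I would multiply through by the explicit left inverse $\bar \Gamma$ built from the group inverse $\bar H^\#$ and the secant $\dw$; the key identity $\bar \Gamma \bar B(\bar v - v) = \bar v - v$ uses the free energy normalization $\1^t \bar f = \1^t f = 0$ to absorb the rank-one defect $L^{-2}\1\1^t$ that otherwise survives the group inverse relation, yielding the exact representation
$$\sqrt N (\bar v - v) = -\bar \Gamma \sqrt N \bigl(\bar F(v) - F(v)\bigr).$$
Third, I would combine the assumed CLT $\sqrt N (\bar F(v) - F(v)) \xrightarrow{d} \mathbf{N}(0, A(v))$ with the almost-sure convergence $\bar \Gamma \xrightarrow{as} \Gamma$ via Slutsky's theorem. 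The convergence of $\bar \Gamma$ itself follows from the strong law of large numbers, the uniform derivative bounds already established for $l_N$ and $\bar \omega$, and the continuity of the group inverse formula on a neighbourhood of the nonsingular matrix $H(f) - L^{-2}\1\1^t$.

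The main obstacle I anticipate lies not in this final assembly but in the construction and behaviour of $\bar \Gamma$: the Jacobian of $\bar F$ is singular because $l_N$ is invariant under additive shifts of $y$, so one cannot naively invert $\bar B$. The route through the maximum-pseudolikelihood formulation resolves this by exhibiting $-\nabla^2 l_N(y)$ as $I - \bar P(y)$ for a symmetric, irreducible, aperiodic stochastic matrix $\bar P$, which in turn supplies a closed-form expression for the group inverse of the secant $\bar H$ and establishes both its existence and continuous dependence on $\bar H$. Once this structural fact is available, the remainder of the proof is a routine application of Slutsky's theorem to the exact (not merely asymptotic) linearization above, with the free-energy constraint doing the additional work of projecting out the one-dimensional null direction.
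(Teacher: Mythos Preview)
Your proposal is correct and follows essentially the same route as the paper: establish the exact identity $\sqrt{N}(\bar v - v) = -\bar\Gamma\,\sqrt{N}(\bar F(v) - F(v))$ via the integral-form linearization and the left-inverse property of $\bar\Gamma$ (which relies on the constraint $\1^t\bar f = \1^t f = 0$), then invoke the assumed CLT for $\bar F(v)$ together with $\bar\Gamma \xrightarrow{as} \Gamma$ and Slutsky's theorem. The paper's proof of the theorem itself is a two-line appeal to exactly these facts, all of which are developed in the preceding subsection just as you describe.
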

\begin{proof}
By the results of Section~\ref{sec: linearization},
\begin{equation*}
 - \bar \Gamma \sqrt{N} (\bar \RFN(v) - \RFN(v)) = \sqrt{N} (\bar v- v), 
\end{equation*}
and the result follows from Slutsky's theorem and the CLT for $\bar \RFN(v)$. 
\end{proof}

\subsection{A Simple Expression for $\Gamma$}
The matrix  $\Gamma$ can be further simplified by observing that
\begin{align*}
    H_{lm}(f)  
        =& - \delta_{lm} \E [ p_l(X, f)] + \E [p_l(X, f) p_m(X, f)] 
            \\
        =& - \delta_{lm} \sum_{i=1}^L \exp( f_i)  \int p_l(x, f) q_i(x) dx
            + \sum_{i=1}^L \exp( f_i )  \int p_l(x, f) p_m(x, f) q_i(x) dx
            \\
        =& - \delta_{lm} \sum_{i=1}^L \exp(f_i)  \int \frac{q_l(x)\exp(f_l)}{\sum_j q_j(x)\exp (f_j) } q_i(x) dx
            + \sum_{i=1}^L \exp(f_i)  \int \frac{q_l(x) \exp(f_l ) }{\left(\sum_k q_k(x) \exp( f_k )\right)} p_m (x, f) q_i(x) dx
            \\
        =& -    \int \left(\frac{ \delta_{lm} q_l(x)\exp(f_l)}{\sum_j q_j(x)\exp (f_j) }
            -  \frac{q_l(x) \exp(f_l ) }{ \sum_k q_k(x) \exp( f_k )} p_m(x, f) \right) \left( \sum_{i=1}^L  q_i(x)  \exp(f_i)  \right) dx
            \\
        =& -    \int \left(  \delta_{lm} q_l(x)\exp(f_l)
            - p_m(x, f) q_l(x) \exp(f_l )  \right) dx
            \\
        =& -  \delta_{lm} \int q_l(x) \exp(f_l) dx 
            + \int   p_m(x, f) q_l(x)\exp(f_l) dx
            \\
        =& -  \delta_{lm} \kappa_l 
            + \frac{\kappa_l}{c_l}\int p_m(x, f) q_l(x) dx,
\end{align*}
where the final line follows from the fact that  $ q_l(x)/ c_l $ is a probability density and consequently integrates to one.
Similarly, we have that
\begin{align*}
    dw_{il}(v)
    &= \sum_{i=1}^L \exp( f_i)  \int \frac{w_i(X) p_l(X,f)}{\qmix(X,f)}  q_i(x)dx \\
    &=  \int \left( \frac{w_i(X) }{\sum_{j=1}^L q_j(x)\exp (f_j)} \right) \left( \frac{q_l(x) \exp (f_l) }{\sum_{j=1}^L q_j(x)\exp (f_j)} \right)\left( \sum_{i=1}^L  \exp( f_i) q_i(x) \right) dx \\
    &=  \int \left( \frac{w_i(X) }{\sum_{j=1}^L q_j(x)\exp (f_j)} \right) q_l(x) \exp (f_l)  dx 
    \\
    &=  \int \left( \frac{w_i(X) }{\sum_{j=1}^L q_j(x)\exp (f_j)} \right) q_l(x) \exp (f_l)  dx
\end{align*}


\section{Data-driven Estimates of the Asymptotic Variance}

Here, we cover in more detail the derivation of our expressions for the asymptotic variance of MBAR estimates from trajectory autocorrelations.
We consider two cases: the case where each state is sampled independently of all the other states with its own Markov chain (as in many Umbrella sampling and Alchemical Simulation calculations),
and the case where all states are sampled together using a single joint Markov chain (as in Parallel Tempering and Hamiltonian Replica Exchange).

\subsection{Independent Sampling of States}
As observed in the main text, if states are sampled independently then $\Xi^{lm}$ is nonzero if and only if $l=m$.
We can therefore write 
\begin{align}
    \big(\grad \Phi^T(v)  &{\Gamma} A \Gamma^T \grad \Phi(v) \big)_{im} 
         = \sum_{j=1}^{L+M} \sum_{k}^{L+M} \sum_{l=1}^L  (\grad \Phi^T {\Gamma}(v))_{ij} \kappa_l  \Xi_{jk}^{ll} (\Gamma^T \grad \Phi(v))_{km} \nonumber \\
        =&  \sum_{j=1}^{L+M} \sum_{k}^{L+M} \sum_{l=1}^L  (\grad \Phi^T {\Gamma}(v))_{ij} 
                \kappa_l  
                \bigg( 
                    \cov \left\{ \xi_j(X_t^l,v), \xi_k(X_t^l,v) \right\} +
                    2\sum_{\tau=1}^\infty \cov\left\{\xi_j(X_t^l,v),\xi_k(X_{t+\tau}^l,v)\right\}
                \bigg)
                (\Gamma^T \grad \Phi(v))_{km} \nonumber \\
        =& \sum_{l=1}^L  \kappa_l    \bigg( 
                    \cov \left\{ \sum_{j=1}^{L+M} (\grad \Phi^T {\Gamma}(v))_{ij} 
\xi_j(X_t^l,v), \sum_{k}^{L+M} (\Gamma^T \grad \Phi(v))_{km}\xi_k(X_t^l,v) \right\}  \nonumber \\
         & \qquad \quad + 2 \sum_{\tau=1}^\infty \cov\left\{ \sum_{j=1}^{L+M} (\grad \Phi^T {\Gamma}(v))_{ij} 
   \xi_j(X_t^l,v), \sum_{k}^{L+M}(\Gamma^T \grad \Phi(v))_{km} \xi_k(X_{t+\tau}^l,v)\right\}
                \bigg)
                 \nonumber \\
        =& 
        \sum_{l=1}^L 
            \bigg( 
                \cov \left\{ \chi_i (X_t^l), \chi_m (X_t^l) \right\} +
                2\sum_{\tau=1}^\infty \cov\left\{\chi_i(X_t^l),\chi_m(X_{t+\tau}^l)\right\}
            \bigg)
\end{align}
which recovers equation~52 in the main text.

\subsection{Joint Sampling of States}
Alternatively, we can consider the case where, for a given $t$, the $X_i^t$'s are collected jointly for all $i$ from a single Markov chain.
By definition we then have the same number of points in every state and $\kappa_i = (1/L)$  for all  $i$.
We then write
\begin{align}
    \big(\grad \Phi^T(v)  &{\Gamma} A \Gamma^T \grad \Phi(v) \big)_{in} 
    = \sum_{j=1}^{L+M} \sum_{k}^{L+M} \sum_{l=1}^L \sum_{m=1}^L    (\grad \Phi^T {\Gamma}(v))_{ij} (1/L) \Xi_{jk}^{lm} (\Gamma^T \grad \Phi(v))_{kn} \nonumber \\
    =&  \frac{1}{L} \sum_{j=1}^{L+M} \sum_{k}^{L+M} \sum_{l=1}^L  \sum_{m=1}^L (\grad \Phi^T {\Gamma}(v))_{ij} 
                \kappa_l  
                \bigg( 
                    \cov \left\{ \xi_j(X_t^l,v), \xi_k(X_t^m,v) \right\} + \nonumber \\
                    &\qquad \quad + 2\sum_{\tau=1}^\infty \cov\left\{\xi_j(X_t^l,v),\xi_k(X_{t+\tau}^m,v)\right\}
                \bigg)
                (\Gamma^T \grad \Phi(v))_{kn} \nonumber \\
    =& (1 / L) \bigg( 
                    \cov \left\{ \sum_{l=1}^L  \sum_{j=1}^{L+M} (\grad \Phi^T {\Gamma}(v))_{ij} 
\xi_j(X_t^l,v), \sum_{m=1}^L  \sum_{k}^{L+M} (\Gamma^T \grad \Phi(v))_{km}\xi_k(X_t^m,v) \right\}  \nonumber \\
         & \qquad \quad + 2 \sum_{\tau=1}^\infty \cov\left\{ \sum_{l=1}^L  \sum_{j=1}^{L+M} (\grad \Phi^T {\Gamma} (v))_{ij} 
   \xi_j(X_t^l,v), \sum_{m=1}^L  \sum_{k}^{L+M}(\Gamma^T \grad \Phi(v))_{kn} \xi_k(X_{t+\tau}^m,v)\right\}
                \bigg)
                 \nonumber \\
        =& 
            \bigg( 
                \cov \left\{ \sum_{l=1}^L \chi_i (X_t^l)), \sum_{m=1}^L \chi_n (X_t^m) \right\} +
                2\sum_{\tau=1}^\infty \cov\left\{\sum_{l=1}^L  \chi_i(X_t^l), \sum_{m=1}^L \chi_n(X_{t+\tau}^m)\right\}
            \bigg)
\end{align}

\end{document}